\documentclass[12pt]{article}
\usepackage{amsmath}
\usepackage{graphicx}
\usepackage{enumerate}
\usepackage{url} 

\pdfminorversion=4
\newcommand{\blind}{1}

\addtolength{\oddsidemargin}{-.5in}%
\addtolength{\evensidemargin}{-.5in}%
\addtolength{\textwidth}{1in}%
\addtolength{\textheight}{-.3in}%
\addtolength{\topmargin}{-.8in}%

\usepackage{amsfonts}%
\usepackage{amssymb}%
\usepackage{amsthm}
\usepackage{bm}
\usepackage{relsize}
\usepackage{graphicx}
\usepackage{caption}
\usepackage{epstopdf}
\usepackage{tikz}
\usetikzlibrary{trees,arrows}
\usetikzlibrary{decorations}
\usetikzlibrary[decorations]
\usepgflibrary{decorations.pathmorphing}
\usepgflibrary[decorations.pathmorphing]
\usetikzlibrary{decorations.pathmorphing}
\usetikzlibrary[decorations.pathmorphing]
\usepackage{booktabs}
\usepackage[authoryear]{natbib}
\usepackage{subcaption}
\usepackage{pseudocode}
\usepackage{verbatim} 

\bibpunct{(}{)}{,}{}{}{;} 

\newcommand\independent{\protect\mathpalette{\protect\independenT}{\perp}}
\def\independenT#1#2{\mathrel{\rlap{$#1#2$}\mkern2mu{#1#2}}}

\DeclareMathOperator*{\E}{\mathbb{E}}
\DeclareMathOperator*{\Et}{\mathbb{E}_t}

\DeclareMathOperator{\circlearrow}{\hbox{$\circ$}\kern-1.5pt\hbox{$\rightarrow$}}
\DeclareMathOperator{\circlecircle}{\hbox{$\circ$}\kern-1.2pt\hbox{$--$}\kern-1.5pt\hbox{$\circ$}}

\DeclareMathOperator*{\Odds}{\textrm{OR}}
\DeclareMathOperator*{\Oddst}{\textrm{OR}_t}
\DeclareMathOperator{\logit}{logit}

\newtheorem{Lma}{Lemma}
\newtheorem{Thm}{Theorem}
\newtheorem{Cor}{Corollary}


\pdfminorversion=4
\begin{document}

\def\spacingset#1{\renewcommand{\baselinestretch}%
{#1}\small\normalsize} \spacingset{1}


\if1\blind
{
  \title{\bf Semiparametric Inference for Non-monotone Missing-Not-at-Random Data: the No Self-Censoring Model}
  \author{Daniel Malinsky,\thanks{
    The authors gratefully acknowledge funding from the National Institutes of Health, grant R01 AI127271-01 A1.}\\
    Department of Biostatistics, Columbia University\\
    Ilya Shpitser,\\
    Department of Computer Science, Johns Hopkins University\\
    and \\
    Eric J. Tchetgen Tchetgen \\
    Department of Statistics, The Wharton School of the University of Pennsylvania}
  \maketitle
} \fi

\if0\blind
{
  \bigskip
  \bigskip
  \bigskip
  \begin{center}
    {\LARGE\bf Semiparametric Inference for Non-monotone Missing-Not-at-Random Data: the No Self-Censoring Model}
\end{center}
  \medskip
} \fi

\bigskip
\begin{abstract}
We study the identification and estimation of statistical functionals of multivariate data missing non-monotonically and not-at-random, taking a semiparametric approach. Specifically, we assume that the missingness mechanism satisfies what has been previously called ``no self-censoring'' or ``itemwise conditionally independent nonresponse," which roughly corresponds to the assumption that no partially-observed variable directly determines its own missingness status. We show that this assumption, combined with an odds ratio parameterization of the joint density, enables identification of functionals of interest, and we establish the semiparametric efficiency bound for the nonparametric model satisfying this assumption. We propose a practical augmented inverse probability weighted estimator, and in the setting with a (possibly high-dimensional) always-observed subset of covariates, our proposed estimator enjoys a certain double-robustness property. We explore the performance of our estimator with simulation experiments and on a previously-studied data set of HIV-positive mothers in Botswana.
\end{abstract}

\noindent%
{\it Keywords:}  Missing data; MNAR; Identification; Double-robustness
\vfill

\newpage
\spacingset{1.5} 
\section{Introduction}

Missing data are a pervasive feature of observations in almost every area of scientific study. Techniques for accounting for missing data in settings where the missingness depends only on observed data (``missingness-at-random" or MAR) are well-developed \citep{robins1994estimation, tsiatis2006semiparametric, little2014statistical}. The situation is substantially more difficult, however, in multivariate settings when the probability of missingness may depend on unobserved parts of the data (``missingness-not-at-random'' or MNAR) and when the patterns of missingness are non-monotone
\citep[see][]{robins1997nonignorable, rotnitzky1997analysis, scharfstein1999adjusting}. Non-monotone missingness may occur, for example, when there are complex patterns of drop-out/re-entry in a longitudinal study or when, 
as is often the case in practice, exposure, covariate, and outcome variables may each be subject to missing values with no specific pattern across the sample.
\cite{robinsgill1997ignorable} and \cite{vansteelandt2007estimation} argue that missingness-not-at-random should be expected in non-monotone longitudinal settings if, for example, a research subject's decision to re-enter depends in part on the evolution of attributes that would have been recorded in missed visits. MNAR is also common in settings where social stigma makes non-response to some research questions (e.g., about HIV status, sexual activity, or drug use) dependent on other imperfectly observed or censored questions \citep{marra2017simultaneous}.

Recent work on nonparametric or semiparametric inference in non-monotone MNAR settings has proceeded by positing some set of restrictions on the missingness mechanism sufficient for identifying a functional or parameter of interest \citep{rotnitzky1998semiparametric, robins2000sensitivity, vansteelandt2007estimation, zhou2010block, li2013weighting, shpitser2016consistent, sadinle2017itemwise, tchetgen2018discrete}.  We adopt the identifying assumption introduced in \cite{shpitser2016consistent} and \cite{sadinle2017itemwise} -- called ``no self-censoring'' in the former and ``itemwise conditionally independent nonresponse'' in the latter --  which allows for both missingness-not-at-random and non-monotonicity. Specifically, we assume only that each measured but sometimes missing variable is conditionally independent of its missingness indicator given all other variables (which may also be missing) and all other missingness indicators. Mechanistically interpreted, this means that no variable is a direct cause of its own missingness status. Our parameter of interest is 
any measurable function of the full data distribution (for example, a marginal mean, correlation, or regression parameter), which is identified from the observed data under this assumption.

\cite{shpitser2016consistent} introduces a pseudolikelihood-based inverse probability weighted (IPW) estimator for the ``no self-censoring'' model. His approach is consistent but (as is common for IPW estimators) not efficient and relies on a specific parameterization of the missingness mechanism. \cite{sadinle2017itemwise} introduce a modeling strategy that requires specifying joint distributions or multivariate kernel density estimation, which can be prohibitively challenging in practice and lacks desirable inferential guarantees such as $\sqrt{n}$-consistency and asymptotic normality for estimating a particular parameter of interest. In contrast to these approaches, we present a semiparametric analysis yielding influence function (IF) based estimators which have a number of desirable properties \citep{newey1990semiparametric, bickel1993efficient, van2000asymptotic, tsiatis2006semiparametric}. Furthermore, our approach exploits an odds ratio parameterization of joint densities due to \cite{chen2007semiparametric, chen2010compatibility}, thereby enabling convenient and congenial specification of the various components of the likelihood. Finally, the estimator we propose benefits from a certain appealing double-robustness property, which can mitigate the threat of model misspecification in the setting where a possibly high-dimensional set of always-observed covariates is also available.

We begin by introducing our central assumption, which we show implies nonparametric identification of both the probability of each missingness pattern and our parameter of interest. Next we use semiparametric theory to derive an observed data influence function of a pathwise differentiable functional on a nonparametric full data model.  Because the no self-censoring model is nonparametric saturated (as defined by \cite{robins1997nonignorable}), this influence function is unique and efficient.
We then consider the case where there is available an additional vector of always-observed covariates and demonstrate that our proposed estimator is doubly-robust. We explore the performance of our estimator in simulated data, and conclude with an application to a cohort study of HIV-positive mothers in Botswana.

\section{The model}

Suppose the underlying data-generating process yields i.i.d.\ samples of $(R,L)$ with full data vector $L = (L_1,...,L_K)'$ and missingness indicators $R = (R_1,...,R_K)'$. $R$ takes values in $\{0,1\}^K$ where $1$ corresponds to ``observed'' and $0$ corresponds to ``missing.'' That is, $R_i = 1$ if $L_i$ is observed and $R_i=0$ otherwise. $L$ may be continuous or discrete. In slight abuse of notation, we use equations such as $R=r$ and $R=1$ as shorthand for $(R_1,...,R_K) = (r_1,...,r_K)$ and $(R_1,...,R_K) = (1,...,1)^K$ (an identity vector of length $K$). Also for any vector $A$ we use $A_{-i} = (A_1,...,A_{i-1},A_{i+1},...,A_K)'$ to denote the vector $A$ with $i$th entry removed. Let $L_{(r)}$ be the subvector of the elements of $L$ that are observed when $R=r$. The observed data is comprised of i.i.d.\ realizations of the vector $(R, L_{(R)})$. 
We use $p(\cdot)$ to denote a distribution or density function.


It is well known that the full data distribution $p(L)$ is not identified from observed data distribution $p(R, L_{(R)})$ without a restriction on the missingness mechanism. We assume the following condition holds:\\ [2ex]
\textbf{Assumption 1} (no self-censoring)
\begin{equation} \label{eq:NSC}
R_i \independent L_i \mid R_{-i},L_{-i}
\end{equation}
for all $i=1,...,K$. 
We also make the following standard assumption:\\ [2ex]
\textbf{Assumption 2} (positivity)
\begin{equation} \label{eq:positivity}
p(R=1|L) > \sigma > 0
\end{equation}
w.p.1 for some constant $\sigma$. 

\cite{shpitser2016consistent} and \cite{sadinle2017itemwise} discuss the no self-censoring assumption (\ref{eq:NSC}) extensively. We only note a few features here which make the model interesting. First, the assumption does not place any restriction on the observed data distribution, i.e., the model is nonparametric saturated \cite[see][]{shpitser2016consistent, sadinle2017itemwise, sadinle2017sequential}. Second, if any of the $K$ independence assumptions in (\ref{eq:NSC}) are false, the joint distribution is no longer nonparametrically identified \citep{mohan2013graphical}. Thus, the model defined by (\ref{eq:NSC}) is an appealing starting point for analyzing MNAR data, either as a substantive model or in the course of sensitivity analysis.

Several authors have used graphical models to represent missingness models and establish identification results \citep{mohan2013graphical,shpitser2015missing,tian2015missing,mohan2018graphical,bhattacharya2019identification}. So far, these have focused mostly on missingness models that can be represented by directed acyclic graphs.
The independence model defined by (\ref{eq:NSC}) can be summarized graphically by a chain graph  \citep{lauritzen1996graphical}. A chain graph $\mathcal{G} = (V,E)$ with vertices $V$ and edges $E$ is a mixed graph that may contain both directed and undirected edges. We associate the vertices $V$ in a graph $\mathcal{G}$ with the random variables $(R_1,...,R_K, L_1,...,L_K)$. Specifically, the chain graph representation of the no self-censoring model contains undirected edges between all pairs of $R$ vertices, directed edges from each $L_i$ to each $R_j$ such that $j \neq i$, and directed edges among all $L$ vertices. None of the undirected edges among $R$ vertices may be directed without changing the model, however any of the directed edges among $L$ vertices may be reversed without changing the model. An example is shown for $K=3$ variables in Figure \ref{fig:chain}.

The absence of any edge between each $L_i$ and the corresponding $R_i$ corresponds to the independence assumption (\ref{eq:NSC}).\footnote{``Self-censoring'' edges $L_i \rightarrow R_i$ are also called ``self-masking'' in \cite{mohan2018estimation} and \cite{tu2019causal}.} This follows from the pairwise Markov property for chain graphs. A vertex $V_i$ is called a descendent of $V_j$ if there is a directed path from $V_j$ to $V_i$ in $\mathcal{G}$. Let $\text{nd}_i(\mathcal{G})$ denote the set of non-descendents of vertex $V_i$ in $\mathcal{G}$. The pairwise Markov property states that for non-adjacent vertices $V_i, V_j$ such that $V_j \in \text{nd}_i(\mathcal{G})$, $V_i \independent V_j \mid \text{nd}_i(\mathcal{G}) \setminus V_j$. Taking any $R_i,L_i$ in Figure \ref{fig:chain} to be $V_i,V_j$ we see this corresponds exactly to the no self-censoring assumption (\ref{eq:NSC}). Since there are no other absent edges (non-adjacencies) in $\mathcal{G}$, the model imposes no additional independence restrictions. The MNAR model corresponding to dropping undirected edges between missingness indicators, discussed in e.g.\ \cite{mohan2013graphical} and \cite{mohan2018graphical}, is a submodel of ours. \cite{mohan2018estimation} use graphical models to study identifiability when (\ref{eq:NSC}) is violated but additional parametric assumptions hold.
Though we do not emphasize graphical concepts in our results below, the graphical perspective can provide signficant insight into nonparametric identifiability. In particular, a version of our first identification result in the next section can be derived by inspecting the Markov factorization for a chain graph, which we describe in the appendix.

\cite{sadinle2017itemwise} show that the full data law is nonparametrically identified under (\ref{eq:NSC}). However, their identification (and estimation) approach requires modeling the full-data joint density. In contrast, we will establish a novel but complementary identification result and variationally-independent parameterization for the probability of missingness $p(R|L)$, which makes inverse probability weighting immediately feasible and obviates the need to model the joint density $p(L)$. 

\begin{figure}[pt]
	\begin{center}
		\begin{tikzpicture}[line width=1.2pt] 
		\node (a) at (-2,2.5) {$L_1$}; 	
		\node (b) at (0,2.5) {$L_2$} 
		edge [<-] node[auto]  {} (a); 
		\node (c) at (2,2.5) {$L_3$}
		edge [<-, bend right=15] node[auto]  {} (a) 
		edge [<-] node[auto]  {} (b);
		\node (d) at (-2,1) {$R_1$}
		edge [<-] node[auto]  {} (b) 	
		edge [<-] node[auto]  {} (c);
		\node (e) at (0,1) {$R_2$}
		edge [<-] node[auto]  {} (a) 	
		edge [<-] node[auto]  {} (c)
		edge [-] node[auto] {} (d);
		\node (f) at (2,1) {$R_3$}
		edge [<-] node[auto]  {} (a) 	
		edge [<-] node[auto]  {} (b)
		edge [-, bend left = 15] node[auto] {} (d)
		edge [-] node[auto] {} (e);
		\end{tikzpicture}
		\caption{A chain graph representation of the no self-censoring independence model, for $K=3$ variables.}
		\label{fig:chain} 	
	\end{center}
\end{figure}
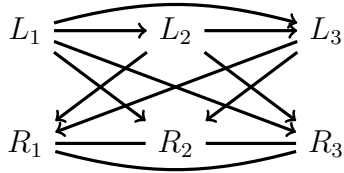

\section{Identification}

In order to fix ideas, suppose that our target parameter of interest is $\beta = \E[B] \equiv \E[b(L)]$ for some known function $b$ of the full data. Later, we consider more general parametric or semiparametric full data models. In order to express this parameter as a function of the observed data, we make use of an odds ratio ($\Odds$) parameterization of the joint density discussed in, e.g., \citet{chen2007semiparametric, chen2010compatibility}. Chen shows how to factorize an arbitrary joint density into a product of variationally-independent components, namely a combination of univariate conditionals and conditional odds ratios. We make use of this factorization to show, first, that the probability of missingness $p(R|L)$ is identified for every missingness pattern. Then, the odds ratio function $\Odds(R,L) \equiv \Odds(R,L;r_0=1,l_0=0) = \frac{p(R|L)}{p(R=1|L)} \frac{p(R=1|L=0)}{p(R|L=0)}$ is also identified.\footnote{We define the odds ratio with fixed reference values $r_0=1$ and $l_0=0$ throughout and suppress the reference value in our notation for brevity.} This enables identification of 
$\beta$. All proofs not in the main body are deferred to the appendix. 

The missingness probability $p(R|L)$ can be expressed using the odds ratio parameterization \citep[Eq.\ 3]{chen2010compatibility}:
\begin{equation} \label{eq:missingnessID}
p(R|L) = \frac{\prod_{i=2}^K \Odds(R_i,(R_1,...,R_{i-1})|(R_{i+1},...,R_K) = 1,L) \prod_{i=1}^K p(R_i|R_{-i}=1,L)}{\sum_{r} \prod_{i=2}^K \Odds(r_i,(r_1,...,r_{i-1})|(R_{i+1},...,R_K) = 1,L) \prod_{i=1}^K p(r_i|R_{-i}=1,L)}
\end{equation}
where $\Odds(R_i,(R_1,...,R_{i-1})|(R_{i+1},...,R_K) = 1,L) = \frac{p(R_i|(R_{i+1},...,R_K) = 1, R_1,...,R_{i-1},L)p(R_i=1|R_{-i} = 1, L)}{p(R_i|R_{-i} = 1, L)p(R_i=1|(R_{i+1},...,R_K) = 1, R_1,...,R_{i-1},L)}$. Under the no self-censoring assumption (\ref{eq:NSC}), each term in this ratio can be written as a function of the observed data. 

\begin{Thm} \label{thm:missingnessID}
	$p(R|L)$ is nonparametrically identified under (\ref{eq:NSC}).
\end{Thm}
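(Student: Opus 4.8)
The plan is to read $p(R\mid L)$ off the odds-ratio factorization (\ref{eq:missingnessID}) of \citet{chen2010compatibility} and to show that, under Assumption 1, every ingredient of that factorization is a functional of the observed-data law of $(R,L_{(R)})$. Since (\ref{eq:missingnessID}) is an algebraic identity valid for an arbitrary joint density, it suffices to identify (i) the univariate conditionals $p(R_i\mid R_{-i}=1,L)$, $i=1,\dots,K$, and (ii) the conditional odds ratios $\Odds(R_i,(R_1,\dots,R_{i-1})\mid (R_{i+1},\dots,R_K)=1,L)$, $i=2,\dots,K$, that appear in the numerator and the normalizing sum of (\ref{eq:missingnessID}).

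The univariate conditionals are immediate. By (\ref{eq:NSC}), $R_i\independent L_i\mid R_{-i},L_{-i}$, so $p(R_i\mid R_{-i}=1,L)=p(R_i\mid R_{-i}=1,L_{-i})$; the only patterns consistent with $R_{-i}=1$ are $R=1$ and the pattern with $R_i=0$ and all remaining entries equal to $1$, and on both of these $L_{-i}$ is fully observed, so this conditional is a functional of the observed data. The same independence also yields the companion fact $p(L_i\mid R_{-i}=1,L_{-i})=p(L_i\mid R=1,L_{-i})$: the conditional density of a sometimes-missing coordinate given all the others is identified from the complete cases. This will be used to fill in missing coordinates below.

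The odds-ratio terms are the heart of the matter. The key structural observation, from (\ref{eq:NSC}) together with the symmetry of the conditional odds ratio, is that for any $i\ne j$ and any fixed values $\rho$ of the remaining indicators, $\Odds(R_i,R_j\mid R_{-\{i,j\}}=\rho,L)$ does not depend on $L$ through $L_i$ or through $L_j$: written with $R_i$ distinguished, each of its factors has the form $p(R_i\mid R_{-i},L)=p(R_i\mid R_{-i},L_{-i})$ by (\ref{eq:NSC}), eliminating $L_i$; written with $R_j$ distinguished, it likewise loses its dependence on $L_j$. Expanding the multivariate odds ratio in (\ref{eq:missingnessID}) as a telescoping product of such scalar odds ratios reduces the problem to identifying each $\Odds(R_i,R_j\mid R_{-\{i,j\}}=\rho,L)$. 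Writing $r^{\rho}_{ab}$ for the pattern with $R_i=a$, $R_j=b$, $R_{-\{i,j\}}=\rho$, one has $\Odds(R_i,R_j\mid R_{-\{i,j\}}=\rho,L)=\{p(R=r^{\rho}_{00},L)\,p(R=r^{\rho}_{11},L)\}/\{p(R=r^{\rho}_{10},L)\,p(R=r^{\rho}_{01},L)\}$, and I would identify the right-hand side by induction on the number of missing coordinates: the three patterns other than $r^{\rho}_{00}$ have strictly fewer missing coordinates, so by the induction hypothesis (and the companion fact, used to fill in any coordinates that remain missing) the corresponding joint sub-densities are identified; then, marginalizing the displayed identity over $L_i$ and $L_j$ and using freeness from $L_i,L_j$ to pull the odds ratio outside the integral produces an equation that determines $\Odds(R_i,R_j\mid R_{-\{i,j\}}=\rho,L)$, the base case $\rho=1$ involving only already-identified univariate conditionals and complete-case densities. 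Substituting the identified ingredients back into (\ref{eq:missingnessID}) then expresses $p(R=r\mid L)$ as an explicit functional of the observed-data law for every pattern $r$.

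The main obstacle is exactly this last step: turning the clean structural fact---odds ratios among missingness indicators are free of their own coordinates---into an actual identifying formula requires the induction on missingness complexity and careful bookkeeping to ensure that every conditional probability and conditional density invoked is one whose conditioning variables are observed on the pattern where it is needed, with positivity (Assumption 2) guaranteeing that the relevant odds ratios and ratios of densities are well defined. An alternative and more transparent derivation of the same conclusion, obtained by inspecting the Markov factorization of the chain graph in Figure~\ref{fig:chain}, is deferred to the appendix.
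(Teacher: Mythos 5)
Your opening moves match the paper's: start from the Chen factorization (\ref{eq:missingnessID}), kill the dependence of $p(R_i\mid R_{-i}=1,L)$ on $L_i$ via (\ref{eq:NSC}), and use the two symmetric representations of a scalar odds ratio to conclude it is free of both $L_i$ and $L_j$. The divergence --- and the gap --- is in how you handle the odds-ratio terms. You telescope into pairwise odds ratios $\Odds(R_i,R_j\mid R_{-\{i,j\}}=\rho,L)$ at \emph{arbitrary} reference levels $\rho$ and propose to identify each one by writing it as a cross-product ratio of joint densities and inducting on the number of missing coordinates. That inductive step closes only when $\rho=1$. For $\rho\neq 1$ the odds ratio is a function $g(L_{-(i,j)})$ that still depends on the coordinates $L_k$ with $\rho_k=0$, which are \emph{unobserved} on the pattern $r^{\rho}_{00}$; after you marginalize the identity over $L_i,L_j$ the left-hand side is $p(R=r^{\rho}_{00},L_{-(i,j)})$, which is itself not an observed-data functional (it retains the missing $L_k$'s), and integrating further over those $L_k$'s would put $g$ back inside the integral. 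So for any pattern with three or more missing coordinates the displayed equation does not determine $g$, and the induction stalls. A second, smaller gap: your ``companion fact'' only delivers univariate conditionals $p(L_i\mid \cdot,L_{-i})$; filling in two or more jointly missing coordinates requires assembling a multivariate conditional from its full conditionals (a compatibility/Brook's-lemma step in the spirit of Sadinle and Reiter), which you invoke but do not supply --- and which the paper deliberately avoids, since Theorem~\ref{thm:missingnessID} identifies only $p(R\mid L)$, not the joint law.

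The paper's proof closes exactly the hole your induction leaves: it keeps all pairwise odds ratios at the reference level $R_{-(i,j)}=1$ and absorbs the discrepancy between $\rho$ and $1$ into higher-order interaction terms $\Gamma(R_i,\dots,R_t\mid R_{-(i,\dots,t)}=1,L)$. The same symmetry argument you use for pairs, applied to every choice of distinguished index within an interaction, shows that the $k$-way interaction is free of \emph{all} of $L_i,\dots,L_t$ --- in particular the $K$-way interaction is free of $L$ entirely --- so every factor in the numerator of (\ref{eq:missingnessID}) depends only on $L$-coordinates that are observed under its own reference pattern. If you want to keep your route, the missing lemma is precisely that $\Odds(R_i,R_j\mid R_{-\{i,j\}}=\rho,L)$ factors as $\Odds(R_i,R_j\mid R_{-\{i,j\}}=1,L)$ times interaction terms each depending on strictly fewer coordinates of $L$.
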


To provide some concrete intuition for this result, we illustrate the case where $K=3$.
\begin{equation*}  
\begin{aligned} 
\lefteqn{p(R_1,R_2,R_3|L)}\\ &= p(R_1|R_{-1}=1,L)p(R_2|R_{-2}=1,L)p(R_3|R_{-3}=1,L)\\
&\times \Odds(R_1,R_2|R_3=1,L)\Odds(R_2,R_3|R_1=1,L)\Odds(R_1,R_3|R_2,L) / C(L) \\
&= p(R_1|R_{-1}=1,L)p(R_2|R_{-2}=1,L)p(R_3|R_{-3}=1,L)\\
&\times \Odds(R_1,R_2|R_3=1,L)\Odds(R_2,R_3|R_1=1,L)\Odds(R_1,R_3|R_2=1,L)\Gamma(R_1,R_2,R_3|L) / C(L)\\
&= p(R_1|R_{-1}=1,L_{-1})p(R_2|R_{-2}=1,L_{-2})p(R_3|R_{-3}=1,L_{-3})\\
&\times \Odds(R_1,R_2|R_3=1,L_3)\Odds(R_2,R_3|R_1=1,L_1)\Odds(R_1,R_3|R_2=1,L_2)\Gamma(R_1,R_2,R_3) / C(L)
\end{aligned}
\end{equation*}
where 
\begin{equation*} \begin{aligned}
C(L) =& \sum_{r_1,r_2,r_3} p(r_1|R_{-1}=1,L)p(r_2|R_{-2}=1,L)p(r_3|R_{-3}=1,L)\\
&\times \Odds(r_1,r_2|R_3=1,L)\Odds(r_2,r_3|R_1=1,L)\Odds(r_1,r_3|r_2,L)\\
=& \sum_{r_1,r_2,r_3} p(r_1|R_{-1}=1,L_{-1})p(r_2|R_{-2}=1,L_{-2})p(r_3|R_{-3}=1,L_{-3})\\
&\times \Odds(r_1,r_2|R_3=1,L_3)\Odds(r_2,r_3|R_1=1,L_1)\Odds(r_1,r_3|R_2=1,L_2)\Gamma(r_1,r_2,r_3)\\
\end{aligned}
\end{equation*}
The first equality follows from (\ref{eq:missingnessID}) using the following basic identity implied by the definition of the odds ratio: $\Odds(R_3,(R_1,R_2)|L) = \Odds(R_1,R_3|R_2,L) \Odds(R_2,R_3|R_1=1,L)$. The second equality follows since $\Odds(R_1,R_3|R_2,L)$ can be written equivalently as $\Odds(R_1,R_3|R_2=1,L)\Gamma(R_1,R_2,R_3|L)$ where $\Gamma(R_1,R_2,R_3|L) = \frac{\Odds(R_1,R_3|R_2,L)}{\Odds(R_1,R_3|R_2=1,L)}$, producing an expression in terms of pairwise odds ratios and a 3-way interaction on the odds ratio scale. The final equality follows by assumption (\ref{eq:NSC}). Each pairwise odds ratio
\begin{equation*}\begin{aligned}
\Odds(R_i,R_j|R_k=1,L) &= \frac{p(R_i|R_k=1,R_j,L)p(R_i=1|(R_j,R_k)=1,L)}{p(R_i|(R_j,R_k)=1,L)p(R_i=1|R_k=1,R_j,L)} \\
&= \frac{p(R_i|R_k=1,R_j,L_{-i})p(R_i=1|(R_j,R_k)=1,L_{-i})}{p(R_i|(R_j,R_k)=1,L_{-i})p(R_i=1|R_k=1,R_j,L_{-i})}
\end{aligned}
\end{equation*}
and by symmetry 
\begin{equation*}\begin{aligned}
\Odds(R_i,R_j|R_k=1,L) &= \frac{p(R_j|R_k=1,R_i,L)p(R_j=1|(R_i,R_k)=1,L)}{p(R_j|(R_i,R_k)=1,L)p(R_j=1|R_k=1,R_i,L)} \\
&= \frac{p(R_j|R_k=1,R_i,L_{-j})p(R_j=1|(R_i,R_k)=1,L_{-j})}{p(R_j|(R_i,R_k)=1,L_{-j})p(R_j=1|R_k=1,R_i,L_{-j})}
\end{aligned}
\end{equation*}
therefore $\Odds(R_i,R_j|R_k=1,L) = \Odds(R_i,R_j|R_k=1,L_k)$ a function of only $L_k$. A similar symmetry argument implies that $\Gamma(R_1,R_2,R_3|L)$ is independent of $L$. Finally $p(R_i|R_{-i}=1,L) = p(R_i|R_{-i}=1,L_{-i})$ which establishes that the numerator (and therefore also the normalizing constant) is function of only observed data. We discuss a more abstract derivation which makes use of the chain graph Markov factorization in the appendix.

As an immediate corollary to Theorem \ref{thm:missingnessID}, we have that the odds ratio $\Odds(R,L)$ is identified\footnote{This result has been updated due to an error in the original published version: the last term in $\delta h_i(L_{-i})$ had been omitted. An erratum has been appended to the journal version of this paper to reflect the change. We are grateful to Wang Miao and Yilin Li for bringing the error to our attention. We note that in the $K=2$ case, the ratio $\Odds(\cdot,\cdot|L)/\Odds(\cdot,\cdot|L=0) = 1$ and so the last term vanishes, but that's not true for $K \geq 3$. From the proof of Theorem 1, we know that the product of odds ratios in the factorization can be expressed as a product of every pairwise odds ratio and all $3$-way, ..., $K$-way interaction terms, each of which is a function of only observed data. So these additional terms in the expression above are functions of only observed data.} for any $K$.
\begin{Cor}
	Under assumptions (\ref{eq:NSC}) and (\ref{eq:positivity}), 
	\begin{equation}
	\textup{$\Odds$}(R,L) = \exp \{ (1-R_1) \delta h_1(L_{-1}) + ... +  (1-R_K) \delta h_K(L_{-K}) \}
	\end{equation}
	where $\delta h_i(L_{-i}) = \log( \frac{p(R_i=0|R_{-i}=1,L_{-i})}{p(R_i=1|R_{-i}=1,L_{-i})}) -  \log (\frac{p(R_i=0|R_{-i}=1,L_{-i}=0)}{p(R_i=1|R_{-i}=1,L_{-i}=0)}) + \log( \frac{\textup{OR}(R_i,(R_1,...,R_{i-1})|(R_{i+1},...,R_K) = 1,L_{-i})}{\textup{OR}(R_i,(R_1,...,R_{i-1})|(R_{i+1},...,R_K) = 1,L_{-i}=0)})$ for $i = 1,...,K$ (and the last term is defined to be zero for $i=1$).
\end{Cor}


Finally, we have the following result: 

\begin{Cor}
	Under assumptions (\ref{eq:NSC}) and (\ref{eq:positivity}), $\beta$ is identified by
	\begin{equation}
	\beta = \E[B] = \E \left\{ \sum_r \frac{\E[\textup{$\Odds$}(r,L)b(L)|R=1, L_{(r)}]}{\E[\textup{$\Odds$}(r,L)|R=1, L_{(r)}]} \mathbb{I}(R=r) \right\}. 
	\end{equation}
\end{Cor}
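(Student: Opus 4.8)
The plan is to reduce the claim to the symmetric odds-ratio (``OR'') parameterization of \citet{chen2007semiparametric,chen2010compatibility}, applied after conditioning on the block of variables that is observed in a given missingness pattern. First I would write, using the law of total expectation and the fact that $\E[b(L)\mid R,L_{(R)}]$ restricted to $\{R=r\}$ equals $\E[b(L)\mid R=r,L_{(r)}]$,
\begin{equation*}
\beta = \E[B] = \sum_r \E\!\left[\, \mathbb{I}(R=r)\, \E[\,b(L)\mid R=r,\, L_{(r)}\,]\,\right].
\end{equation*}
It therefore suffices to prove that, for each pattern $r$,
\begin{equation*}
\E[\,b(L)\mid R=r,\,L_{(r)}\,] \;=\; \frac{\E[\,\Odds(r,L)\,b(L)\mid R=1,\,L_{(r)}\,]}{\E[\,\Odds(r,L)\mid R=1,\,L_{(r)}\,]},
\end{equation*}
and that the right-hand side is a functional of the observed-data law $p(R,L_{(R)})$.

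For the main step, write $L=(L_{(r)},M)$, where $M$ collects the coordinates of $L$ that are missing when $R=r$. By the defining symmetry of the odds ratio, $\Odds(r,L)\equiv\Odds(R,L;r_0=1,l_0=0)=\frac{p(L\mid R=r)\,p(L=0\mid R=1)}{p(L\mid R=1)\,p(L=0\mid R=r)}$, so $p(L\mid R=r)\propto p(L\mid R=1)\,\Odds(r,L)$ with a constant of proportionality not depending on $L$; normalizing yields Chen's identity $p(L\mid R=r)=p(L\mid R=1)\,\Odds(r,L)\,/\,\E[\Odds(r,L)\mid R=1]$. Dividing by the induced marginal over $L_{(r)}$ (Bayes' rule inside the slice where $L_{(r)}$ is held fixed) the unconditional normalizer cancels and leaves
\begin{equation*}
p(M\mid R=r,\,L_{(r)}) \;=\; \frac{p(M\mid R=1,\,L_{(r)})\,\Odds(r,L)}{\E[\Odds(r,L)\mid R=1,\,L_{(r)}]};
\end{equation*}
integrating $b(L)$ against this density gives the displayed formula for $\E[b(L)\mid R=r,L_{(r)}]$. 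Assumption (\ref{eq:positivity}) is what makes this legitimate: it guarantees $p(\cdot\mid R=1,L_{(r)})$ dominates $p(\cdot\mid R=r,L_{(r)})$ and keeps the odds ratios finite; and on the $p(\cdot\mid L_{(r)})$-null event $\{R=r\}$ there is nothing to check.

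It remains to note that each ingredient is identified. When $R=1$ the whole vector $L$ is observed, so $p(L\mid R=1,L_{(r)})$ is a component of the observed-data distribution, and hence $\E[\Odds(r,L)\,b(L)\mid R=1,L_{(r)}]$ and $\E[\Odds(r,L)\mid R=1,L_{(r)}]$ are identified as soon as the integrand is; and $\Odds(r,L)=\exp\{\sum_i (1-r_i)\,\delta h_i(L_{-i})\}$ is identified under (\ref{eq:NSC}) by the preceding corollary. (Assumption (\ref{eq:NSC}) is used only here: the representation of $\beta$ is an algebraic identity valid for any law of $(R,L)$ obeying positivity.) Substituting back into the total-expectation decomposition yields the stated expression. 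The step I expect to demand the most care is exactly the passage from the full conditional $p(L\mid R=r)$ to the slice conditional $p(M\mid R=r,L_{(r)})$ --- i.e.\ verifying that what replaces the unconditional normalizer $\E[\Odds(r,L)\mid R=1]$ is precisely the \emph{conditional} normalizer $\E[\Odds(r,L)\mid R=1,L_{(r)}]$, equivalently that moving the odds-ratio reference value of the observed block off $0$ only introduces a factor depending on $(R,L_{(r)})$ that cancels between numerator and denominator. The rest is routine bookkeeping with conditional densities.
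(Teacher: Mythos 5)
Your proof is correct and follows essentially the same route as the paper: decompose $\beta$ by missingness pattern and apply the bivariate odds ratio factorization $\E[b(L)\mid R=r,L_{(r)}]=\E[\Odds(r,L)b(L)\mid R=1,L_{(r)}]/\E[\Odds(r,L)\mid R=1,L_{(r)}]$, with (\ref{eq:NSC}) entering only through the identification of $\Odds(r,L)$ from the preceding corollary. The only difference is that the paper cites this factorization directly from \citet[Eq.~1]{chen2007semiparametric}, whereas you derive it from the symmetry of the odds ratio and verify that the conditional normalizer is indeed $\E[\Odds(r,L)\mid R=1,L_{(r)}]$ --- a correct and worthwhile elaboration of the step the paper leaves to the reference.
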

\begin{proof}
	By the bivariate odds ratio factorization \citep[Eq.\ 1]{chen2007semiparametric}, we have that for any measurable function $b$:
	\begin{equation*}
	\E[B|R=r, L_{(r)}] = \frac{\E[\Odds(r,L)b(L)|R=1, L_{(r)}]}{\E[\Odds(r,L)|R=1, L_{(r)}]}
	\end{equation*}
	Therefore,
	\begin{equation*} \begin{aligned}
	\E[B] &= \sum_{l_{(r)}} \sum_r \frac{\E[\Odds(r,L)b(L)|R=1, L_{(r)}]}{\E[\Odds(r,L)|R=1, L_{(r)}]} p(r, l_{(r)})\\
	&= \E \left\{ \sum_r \frac{\E[\Odds(r,L)b(L)|R=1, L_{(r)}]}{\E[\Odds(r,L)|R=1, L_{(r)}]} \mathbb{I}(R=r) \right\}. 
	\end{aligned} \end{equation*}
	(Note that one may replace sums over values of $L$ with integrals as appropriate for continuous components of $L$.)
\end{proof}

Thus $\beta$ can be expressed as a functional of the observed data distribution $p(R,L_{(R)})$ under assumptions (\ref{eq:NSC}) and (\ref{eq:positivity}), provided all terms in the odds ratio factorization of $p(R | L)$ are defined.  A necessary condition for the latter is that all patterns of the form $R_{-i}=1$ for each $i$ (all covariates but one are observed) have support in the data. 

\section{Semiparametric theory}

Suppose that the full data law $p(L;\eta)$ is indexed by an infinite-dimensional parameter $\eta$. Of interest is a finite-dimensional parameter $\beta = \beta(\eta)$. Further, we assume the conditional model $p(R|L; \gamma)$ is indexed by 
an infinite-dimensional parameter $\gamma$. We are interested in deriving the efficient influence function for $\beta$ in the nonparametric no self-censoring model, i.e., the model satisfying (\ref{eq:NSC}) and (\ref{eq:positivity}) but otherwise unrestricted. We denote this model by $\mathcal{M}_{\text{nsc}}$. Among all regular and asymptotically linear (RAL) estimators of $\beta$ in $\mathcal{M}_{\text{nsc}}$, the estimator based on the efficient influence function (that is, solving the efficient IF estimating equation with all nuisance models estimated nonparametrically) achieves, under sufficient regularity conditions, the minimum asymptotic variance and is said to achieve the semiparametric efficiency bound \citep{bickel1993efficient}.

In what follows we define $\pi_j(L) \equiv p(R=r_j | L)$ for missingness patterns $j = 1,...,J$. We reserve $J$ for the complete-case pattern, i.e., $R=r_J=1$. Before presenting the main result for $\mathcal{M}_{\text{nsc}}$, we present the efficient influence function for $\beta$ in a different nonparametric model, not necessarily satisfying the no self-censoring assumption. This influence function is easier to derive and will later on suggest an estimator that is both easier to implement and exhibits an interesting double-robustness property. Let the conditional model $p(R|L)$ be parameterized as $\log \frac{\pi_j(L)}{\pi_J(L)} = h_j(L; \gamma)$. For the next result, we assume that the log odds ratio is a known function of $L$. Specifically, we assume for the moment that $h_j(L; \gamma) = h_{1,j}(L=0;\gamma) + h_{2,j}(L)$ and denote by $\mathcal{M}_{\text{odds}}$ the nonparametric model where $h_{2,j}$ is known for $j=1,...,J$. We use $\phi_{\text{full}}(\beta)$ to denote the full data influence function for $\beta$. (For example, if $\beta = \E[b(L)]$ then $\phi_{\text{full}}(\beta) = b(L) - \beta$.)

\begin{Lma} \label{lma:oddsknown}
	In $\mathcal{M}_{\textup{odds}}$, the efficient influence function for $\beta$ is \begin{align*}
	\phi_{\textup{odds}}(\beta) =& \frac{\mathbb{I}(R=1)}{\pi_J(L)} \phi_{\textup{full}}(\beta) + \sum_{j=1}^{J-1} \mathbb{I}(R=r_j) \E[\phi_{\textup{full}}(\beta) |R=r_j, L_{(r_j)}]\\
	& - \frac{\mathbb{I}(R=1)}{\pi_J(L)} \sum_{j = 1}^{J-1} \pi_j(L) \E[\phi_{\textup{full}}(\beta) |R=r_j, L_{(r_j)}].
	\end{align*} 
\end{Lma}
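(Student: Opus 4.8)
The plan is to treat Lemma~\ref{lma:oddsknown} as a standard semiparametric efficiency computation for a coarsened-data model. The key structural observation is that under $\mathcal{M}_{\textup{odds}}$ the only unknowns are the nonparametric full-data law $p(L;\eta)$ and the finite-dimensional vector of missingness intercepts $\lambda=(h_{1,j}(L=0))_{j=1}^{J-1}$: with the log odds ratios $h_{2,j}$ known, $p(R\mid L)$ is a known multinomial-logistic function of $(L,\lambda)$ with ``covariate'' functions $h_{2,j}(L)$, so it does not involve $\eta$, the likelihood factorizes as $p(L;\eta)\,p(R\mid L;\lambda)$ into variationally independent pieces, and $\beta$ depends on $\eta$ alone. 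I would then (i) compute the observed-data scores and assemble the tangent space $\Lambda$, (ii) verify that $\phi_{\textup{odds}}(\beta)$ is a valid gradient of $\beta$, and (iii) verify that $\phi_{\textup{odds}}(\beta)\in\Lambda$, which by uniqueness of the projection identifies it as the efficient influence function \citep{bickel1993efficient}.

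For step (i): perturbing $\eta$ along a mean-zero full-data direction $g(L)$ with $\lambda$ held fixed produces observed-data score $\E[g(L)\mid R,L_{(R)}]$, while perturbing $\lambda_k$ produces full-data score $\mathbb{I}(R=r_k)-\pi_k(L)$ and hence observed-data score $\mathbb{I}(R=r_k)-\E[\pi_k(L)\mid R,L_{(R)}]$. The space $\Lambda$ is the closed linear span of these two families; combining the two orthogonality conditions, $\nu(R,L_{(R)})\in\Lambda^{\perp}$ exactly when $\E[\nu\mid L]=0$ and $\E[\mathbb{I}(R=r_k)\,\nu]=0$ for every $k<J$.

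For step (ii): a direct conditioning argument gives $\E[\phi_{\textup{odds}}(\beta)\mid L]=\phi_{\textup{full}}(\beta)$, because the inverse-probability term returns $\phi_{\textup{full}}(\beta)$ and the two remaining sums cancel in conditional expectation; this yields $\E[\phi_{\textup{odds}}(\beta)\,S_\eta]=\E[\phi_{\textup{full}}(\beta)\,g]$, the pathwise derivative of $\beta$ along $\eta$-paths. Since $\beta$ is constant along $\lambda$-paths, it then suffices to check $\E\big[\phi_{\textup{odds}}(\beta)\,(\mathbb{I}(R=r_k)-\E[\pi_k\mid R,L_{(R)}])\big]=0$ for $k<J$, which follows from the tower property and the identity $\E\big[\pi_k(L)\,\E[\phi_{\textup{full}}(\beta)\mid R=r_k,L_{(r_k)}]\,\big|\,L_{(r_k)}\big]=\E[\pi_k(L)\,\phi_{\textup{full}}(\beta)\mid L_{(r_k)}]$. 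These computations also deliver $\E[\phi_{\textup{odds}}(\beta)]=0$, so $\phi_{\textup{odds}}(\beta)$ is a gradient, hence an influence function.

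Step (iii) is where I expect the real work to lie. Writing $\psi_j\equiv\E[\phi_{\textup{full}}(\beta)\mid R=r_j,L_{(r_j)}]$, I would try to exhibit $\phi_{\textup{odds}}(\beta)=\E[g^{\star}(L)\mid R,L_{(R)}]+\sum_{k<J}a_k\big(\mathbb{I}(R=r_k)-\E[\pi_k\mid R,L_{(R)}]\big)$ for some mean-zero $g^{\star}$ and scalars $a_k$; equating the two sides on the complete-case stratum pins down $g^{\star}$, and equating on the incomplete strata forces each $a_k$ to equal $\psi_k$ minus a conditional expectation of the complete-case expression, so the substance of the lemma is that these residuals are genuinely constant in $L_{(r_k)}$. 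This collapse is where the odds-ratio parameterization is essential: $\Odds(r_j,L)=\exp\{h_{2,j}(L)\}$ is known, so by the bivariate odds-ratio factorization of \citet{chen2007semiparametric} each pattern-specific conditional mean can be transported to the complete cases via $\psi_j(L_{(r_j)})=\E[\Odds(r_j,L)\,\phi_{\textup{full}}(\beta)\mid R=1,L_{(r_j)}]\big/\E[\Odds(r_j,L)\mid R=1,L_{(r_j)}]$, and inserting this relation produces the required cancellations; equivalently, one checks directly that $\phi_{\textup{odds}}(\beta)$ is orthogonal to every $\nu\in\Lambda^{\perp}$ characterized in step (i). Once $\phi_{\textup{odds}}(\beta)\in\Lambda$ is established, the conclusion follows since the efficient influence function is the unique gradient lying in the tangent space.
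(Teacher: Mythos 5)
Your steps (i) and (ii) are correct and, between them, establish that $\phi_{\textup{odds}}(\beta)$ is \emph{an} influence function in $\mathcal{M}_{\textup{odds}}$: the tangent space you describe (full-data directions yielding $\E[g(L)\mid R,L_{(R)}]$ plus the $J-1$ intercept scores $\mathbb{I}(R=r_k)-\E[\pi_k(L)\mid R,L_{(R)}]$), the identity $\E[\phi_{\textup{odds}}\mid L]=\phi_{\textup{full}}$, and the orthogonality $\E\bigl[\phi_{\textup{odds}}\bigl(\mathbb{I}(R=r_k)-\E[\pi_k\mid R,L_{(R)}]\bigr)\bigr]=\E\bigl[\pi_k\,\E[\phi_{\textup{full}}\mid R=r_k,L_{(r_k)}]\bigr]-\E[\pi_k\,\phi_{\textup{full}}]=0$ all check out. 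This part mirrors the computational core of the paper's proof, which verifies that the choice $a^{*}=\E[\phi_{\textup{full}}\mid R,L_{(R)}]$ in the Rotnitzky--Robins representation of $\Lambda^{\perp}_{1,0}$ is orthogonal to the scores for the unknown intercepts.

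The gap is step (iii), which carries the entire content of the word ``efficient'' and which you do not execute. Writing $\psi_k\equiv\E[\phi_{\textup{full}}\mid R=r_k,L_{(r_k)}]$, your route requires exhibiting $\phi_{\textup{odds}}=\E[g^{\star}(L)\mid R,L_{(R)}]+\sum_{k<J}a_k\bigl(\mathbb{I}(R=r_k)-\E[\pi_k\mid R,L_{(R)}]\bigr)$ with \emph{scalar} $a_k$; eliminating $g^{\star}$ via the complete-case stratum reduces this to the claim that $\psi_k(L_{(r_k)})-\E\bigl[\pi_J(L)^{-1}\bigl(\phi_{\textup{full}}-\sum_{l<J}\pi_l\psi_l\bigr)\bigm|R=r_k,L_{(r_k)}\bigr]$ is constant in $L_{(r_k)}$ for every $k$. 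The odds-ratio transport identity does not deliver this for free: moving $\E[\,\cdot\mid R=r_k,L_{(r_k)}]$ to the complete cases introduces the weight $\pi_k/\pi_J$, so the conditional expectation above carries an extra factor $\pi_J^{-1}$ relative to $\psi_k$, and no algebraic cancellation forces the residual to be constant. Note also that the unknown intercepts contribute only $J-1$ scalar directions to the tangent space, while the orthogonal complement of the full-data-score directions (observed-data functions with $\E[\,\cdot\mid L]=0$) is infinite-dimensional, so membership of $\phi_{\textup{odds}}$ in $\Lambda$ is a genuinely strong claim that must be proved, not asserted. The paper takes a different route entirely: it characterizes the orthocomplement of the \emph{nuisance} tangent space, $\Lambda^{\perp}_{1,0}\cap\Lambda^{\perp}_{2,0}$, and argues that the orthogonality conditions against $\Lambda^{F}_{2,0}$ pin down $a^{*}$ uniquely, so that the influence function is unique and efficiency follows without any tangent-space membership check. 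As written, your proposal establishes only that $\phi_{\textup{odds}}$ is a valid influence function, not that it is the efficient one.
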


This result is a version of Theorem 4.1 in \cite{robins2000sensitivity}, though we provide a simple and self-contained proof in the appendix using our notation. Next, we return to the model $\mathcal{M}_{\text{nsc}}$, satisfying the no self-censoring assumption and where the odds ratio is not known a priori. Deriving the influence function for $\beta$ involves two steps: first noticing that the odds ratio is point identified in $\mathcal{M}_{\text{nsc}}$ by the results in the previous section, and second ``adjusting'' the above IF for nonparametric estimation of the odds ratio, subject to the no self-censoring restriction.


\begin{Thm} \label{thm:nonparametricIF}
	In $\mathcal{M}_{\textup{nsc}}$, the efficient influence function for $\beta$ is $\phi_{\textup{nsc}}(\beta) = - \E \left[ \frac{\partial}{\partial \beta} \phi_{\textup{odds}}(\beta) \right]^{-1} \times \left( \phi_{\textup{odds}}(\beta) +  \phi_{\textup{adj}}(\beta) \right)$, with $\phi_{\textup{odds}}(\beta)$ from Lemma \ref{lma:oddsknown} and
	\begin{align*}
	\phi_{\textup{adj}}(\beta) = - \sum_{i=1}^K \E[(1-R_i)|L_{-i}] \frac{\mathbb{I}(R_{-i}=1)}{p(R_{-i}=1|L_{-i})} \left( \frac{R_i}{p(R_i=1|R_{-i},L_{-i})}-1 \right) \\
	\times \frac{  p(R_i=1|R_{-i},L_{-i})}{ p(R_i=0|R_{-i},L_{-i})} \left(1 - \frac{p(R_i=1|R_{(1:i-1)},R_{(i+1:K)}=1,L)}{p(R_i=0|R_{(1:i-1)},R_{(i+1:K)}=1,L)} \right) \E[\Delta(R,L)|R_i=0,L_{-i}] \\
	- \sum_{i=1}^K \E[(1-R_i)|L_{-i}] \frac{\mathbb{I}(R_{(i+1:K)}=1)}{p(R_{(i+1:K)}=1|L_{-i})} \left( \frac{R_i}{p(R_i=1|R_{(1:i-1)},R_{(i+1:K)}=1,L_{-i})}-1 \right) \\
	\times \frac{1}{\textup{OR}(R_i,R_{(1:i-1)}|R_{(i+1:K)} = 1,L_{-i})} \E[\Delta(R,L)|R_i=0,L_{-i}]
	\end{align*}
	with $\Delta(R,L) \equiv \phi_{\textup{full}}(\beta) - \E[\phi_{\textup{full}}(\beta) |R, L_{(R)}].$
\end{Thm}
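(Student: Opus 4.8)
The plan is to exhibit a regular asymptotically linear (RAL) estimator of $\beta$ in $\mathcal{M}_{\textup{nsc}}$ and read off its influence function. Since $\mathcal{M}_{\textup{nsc}}$ is nonparametric saturated, the observed-data tangent space is all of $L^2_0$, so every RAL estimator has the \emph{same} influence function, which is therefore the unique (hence efficient) one; it suffices to do the computation for one convenient construction. The natural choice solves $\mathbb{P}_n\,\phi_{\textup{odds}}(\hat\beta) = 0$, the estimating equation behind Lemma \ref{lma:oddsknown} (writing $\mathbb{P}_n$ for the average over the $n$ i.i.d.\ observations), except that the odds ratio --- which is \emph{not} assumed known in $\mathcal{M}_{\textup{nsc}}$ --- is replaced by a nonparametric plug-in. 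By Corollary 1 the odds ratio is point identified under (\ref{eq:NSC})--(\ref{eq:positivity}) through the contrasts $\delta h_i(L_{-i})$, $i=1,\dots,K$, each a conditional log-odds-ratio in the subpopulation $\{R_{-i}=1\}$; and for any law in $\mathcal{M}_{\textup{nsc}}$, $\phi_{\textup{odds}}(\beta)$ built from the true $\pi_j(L)$ is mean zero, so this is a valid estimating equation. The target is then the influence function of this $M$-estimator with estimated nuisance, which has the familiar form $\phi_{\textup{nsc}}(\beta) = -\E[\partial_\beta\phi_{\textup{odds}}(\beta)]^{-1}(\phi_{\textup{odds}}(\beta) + \phi_{\textup{adj}}(\beta))$, where $\phi_{\textup{adj}}$ is the inner product of $\E[\partial_\eta\phi_{\textup{odds}}]$ with the influence function of the nuisance estimator $\hat\eta = (\widehat{\delta h}_1,\dots,\widehat{\delta h}_K)$.

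First I would record the asymptotically linear expansion of $\hat\eta$. Writing each $\delta h_i$ through its defining conditional-odds-ratio moment restriction on $\{R_{-i}=1\}$ and differentiating, a routine calculation gives that a perturbation of $\delta h_i$ contributes, up to a deterministic $(R_{-i},L_{-i})$-measurable normalization, the residual
\[
\frac{\mathbb{I}(R_{-i}=1)}{p(R_{-i}=1|L_{-i})}\left(\frac{R_i}{p(R_i=1|R_{-i},L_{-i})}-1\right)\frac{p(R_i=1|R_{-i},L_{-i})}{p(R_i=0|R_{-i},L_{-i})},
\]
which under (\ref{eq:NSC}) has conditional mean zero given $(R_{-i},L_{-i})$ and is precisely the first factor appearing in the $i$-th summand of $\phi_{\textup{adj}}(\beta)$. (Because the observed-data model is nonparametric this nuisance influence function is itself uniquely determined, so the normalization is unambiguous.)

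The main obstacle is the second step: Gateaux-differentiating $\phi_{\textup{odds}}(\beta)$ with respect to $\eta$ along paths that respect the additive no-self-censoring structure of the odds ratio from Corollary 1, so as to obtain $\E[\partial_\eta\phi_{\textup{odds}}]$. The bookkeeping is delicate because $\phi_{\textup{odds}}$ depends on $\eta$ through every $\pi_j(L)$ via the identification formula (\ref{eq:missingnessID}), so perturbing a single $\delta h_i$ propagates through all $J$ missingness patterns; one must verify that after summing over $j$ the net effect collapses to the clean per-coordinate form, with weight $\E[(1-R_i)|L_{-i}]$ and factor $\E[\Delta(R,L)|R_i=0,L_{-i}]$ depending only on the within-pattern residual $\Delta(R,L)=\phi_{\textup{full}}(\beta)-\E[\phi_{\textup{full}}(\beta)|R,L_{(R)}]$. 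The levers for this collapse are the symmetry identities established in the identification section --- that each pairwise odds ratio depends only on its conditioning coordinate and that the higher-order interaction terms $\Gamma$ are free of $L$ --- together with $\E[\Delta(R,L)|R,L_{(R)}]=0$, which annihilates the terms that would otherwise obstruct the simplification.

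Finally I would assemble the pieces: Taylor-expanding $0 = \mathbb{P}_n\,\phi_{\textup{odds}}(\hat\beta;\hat\eta)$ about $(\beta,\eta_0)$, substituting the linear expansion of $\hat\eta$ from the first step and the derivative $\E[\partial_\eta\phi_{\textup{odds}}]$ from the second, and solving for $\hat\beta-\beta$, which yields $\hat\beta-\beta = -\E[\partial_\beta\phi_{\textup{odds}}(\beta)]^{-1}\,\mathbb{P}_n\{\phi_{\textup{odds}}(\beta)+\phi_{\textup{adj}}(\beta)\}+o_p(n^{-1/2})$ and hence the stated influence function; nonparametric saturation then upgrades it to the efficient influence function. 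As a sanity check, when $\beta=\E[b(L)]$ one verifies directly --- using $\E[\mathbb{I}(R=1)/\pi_J(L)\mid L]=1$ --- that $\E[\partial_\beta\phi_{\textup{odds}}(\beta)]=-1$, so there the leading matrix reduces to the identity and the role of $-\E[\partial_\beta\phi_{\textup{odds}}]^{-1}$ is only to accommodate general, implicitly-defined finite-dimensional $\beta$.
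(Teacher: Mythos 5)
Your architecture is sound and is, in substance, the paper's: treat $\phi_{\text{odds}}$ as an unbiased estimating function whose odds-ratio nuisance is identified but unknown in $\mathcal{M}_{\text{nsc}}$, obtain $\phi_{\text{adj}}$ as the correction for estimating that nuisance, and invoke nonparametric saturation to upgrade uniqueness to efficiency. The paper phrases the correction via pathwise differentiation of $\Et[\phi_{\text{odds}}(\beta_t;\Oddst)]=0$ along parametric submodels and then identifies $\phi_{\text{adj}}$ from $\E[\nabla_t\phi_{\text{odds}}(\beta;\Oddst)]=\E[\phi_{\text{adj}}\,S(R,L_{(R)})]$; this is the submodel version of your M-estimation expansion, and it sidesteps the awkwardness in your framing of speaking of ``the influence function of $\hat\eta$'' when $\eta=(\delta h_1(\cdot),\dots,\delta h_K(\cdot))$ is an infinite-dimensional nuisance rather than a pathwise-differentiable finite-dimensional parameter.

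The gap is that the step you yourself flag as the main obstacle --- showing that the nuisance derivative of $\phi_{\text{odds}}$ collapses to $\sum_i\E[(1-R_i)|L_{-i}]\,\nabla\delta h_i(L_{-i})\,\E[\Delta(R,L)|R_i=0,L_{-i}]$ with the stated inverse-probability-weighted residual --- is essentially the entire content of the paper's proof, and you do not carry it out. Concretely, what is missing: (i) the cancellation of the two terms involving $\nabla_t\E_t[\phi_{\text{full}}|R=r_j,L_{(r_j)}]$, using $\E[\mathbb{I}(R=1)\tfrac{\pi_j(L)}{\pi_J(L)}f(L_{(r_j)})]=\E[\mathbb{I}(R=r_j)f(L_{(r_j)})]$, which is what makes the residual $\Delta$ appear at all; (ii) the observation that in $\nabla_t(\pi_j/\pi_J)=\nabla_t\{\Oddst(r_j,L)\,\pi_j(L=0;t)/\pi_J(L=0;t)\}$ the derivative of the baseline odds is annihilated because $\E[\mathbb{I}(R=r_j)\Delta(R,L)]=0$, leaving only $\nabla_t\log\Oddst$; (iii) the conversion of the pattern sum $\sum_j\pi_j(L)(1-r_{j,i})(\cdots)$ into $\E[(1-R_i)(\cdots)]$ after substituting the additive form $\log\Odds(R,L)=\sum_i(1-R_i)\delta h_i(L_{-i})$ from Corollary 1, followed by conditioning on $(R_i=0,L_{-i})$ to produce the weights $\E[(1-R_i)|L_{-i}]$ and $\E[\Delta|R_i=0,L_{-i}]$; and (iv) the representation of $\nabla_t\delta h_i(L_{-i};t)$ as a conditional covariance of the score with $\tfrac{\mathbb{I}(R_{-i}=1)}{p(R_{-i}=1|L_{-i})}\bigl(\tfrac{R_i}{p(R_i=1|R_{-i},L_{-i})}-1\bigr)\tfrac{p(R_i=1|R_{-i},L_{-i})}{p(R_i=0|R_{-i},L_{-i})}$, which you assert as a ``routine calculation'' but which is precisely where the specific factors in $\phi_{\text{adj}}$ come from. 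None of these steps would fail, and your sanity check that $\E[\partial_\beta\phi_{\text{odds}}]=-1$ for $\beta=\E[b(L)]$ is correct; but as written the proposal names the answer rather than deriving it, so to count as a proof you must execute (i)--(iv).
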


Therefore\footnote{This result has been updated due to an error in the original published version. The omitted term in Corollary 1 (see footnote above) leads to an additional contribution to this expression from taking the derivative $\nabla_t \log \frac{\Oddst(R_i,(R_1,...,R_{i-1})|(R_{i+1},...,R_K) = 1,L_{-i})}{\Odds(R_i,(R_1,...,R_{i-1})|(R_{i+1},...,R_K) = 1,L_{-i}=0)}$. An erratum has been appended to the journal version of this paper to reflect the change.} the semiparametric efficiency bound in $\mathcal{M}_{\text{nsc}}$ is given by the variance of $\phi_{\text{nsc}}$. 

\section{Double-robustness in settings with always-observed covariates}

In some settings, there may be available an additional set of always-observed covariates $X$. For example, $X$ may consist of baseline measurements (with no missing values) in a longitudinal study with complex patterns of missingness at follow-up times $1,...,K$. We may assume that our fundamental identifying assumptions on the missingness mechanism (\ref{eq:NSC}) and (\ref{eq:positivity}) hold conditional on $X$, i.e.,
\begin{equation} \label{eq:NSCX}
R_i \independent L_i | R_{-i},L_{-i},X
\end{equation}
for $i=1,...,K$ and
\begin{equation} \label{eq:positivityX}
p(R=1|L,X) > \sigma' > 0
\end{equation}
w.p.1 for some constant $\sigma'$. Versions of Lemma \ref{lma:oddsknown} and Theorem \ref{thm:nonparametricIF} hold under these assumptions, where $X$ is added to the conditioning set in all appropriate places. In particular, the conditional odds ratio $\Odds(R,L|X) = \exp \{ (1-R_1) \delta h_1(L_{-1},X) + ... +  (1-R_K) \delta h_K(L_{-K},X) \} $ where $\delta h_i(L_{-i},X) = \log( \frac{p(R_i=0|R_{-i}=1,L_{-i},X)}{p(R_i=1|R_{-i}=1,L_{-i},X)}) -  \log (\frac{p(R_i=0|R_{-i}=1,L_{-i}=0,X)}{p(R_i=1|R_{-i}=1,L_{-i}=0,X)})$, $\pi_j(L,X) = p(R=r_j|L,X)$, and likewise for other terms which were previously considered only functions of $L$. 

It follows immediately that
\begin{align*}
\phi_{\text{odds}}(\beta) =& \frac{\mathbb{I}(R=1)}{\pi_J(L,X)} \phi_{\text{full}}(\beta) + \sum_{j=1}^{J-1} \mathbb{I}(R=r_j) \E[\phi_{\text{full}}(\beta) |R=r_j, L_{(r_j)},X]\\ &- \frac{\mathbb{I}(R=1)}{\pi_J(L,X)} \sum_{j = 1}^{J-1} \pi_j(L,X) \E[\phi_{\text{full}}(\beta) |R=r_j, L_{(r_j)},X]
\end{align*} 
and replacing assumptions (\ref{eq:NSC}) and (\ref{eq:positivity}) with (\ref{eq:NSCX}) and (\ref{eq:positivityX}),  $\phi_{\text{nsc}}(\beta) = - \E \left[ \frac{\partial}{\partial \beta} \phi_{\text{odds}}(\beta) \right]^{-1} \times \left( \phi_{\text{odds}}(\beta) +  \phi_{\text{adj}}(\beta) \right)$ with $\phi_{\text{odds}}(\beta)$ as above and
\begin{align*}
\phi_{\textup{adj}}(\beta) = - \sum_{i=1}^K \E[(1-R_i)|L_{-i},X] \frac{\mathbb{I}(R_{-i}=1)}{p(R_{-i}=1|L_{-i},X)} \left( \frac{R_i}{p(R_i=1|R_{-i},L_{-i},X)}-1 \right) \\
\times \frac{  p(R_i=1|R_{-i},L_{-i},X)}{ p(R_i=0|R_{-i},L_{-i},X)} \left(1 - \frac{p(R_i=1|R_{(1:i-1)},R_{(i+1:K)}=1,L,X)}{p(R_i=0|R_{(1:i-1)},R_{(i+1:K)}=1,L,X)} \right) \E[\Delta(R,L,X)|R_i=0,L_{-i},X] \\
- \sum_{i=1}^K \E[(1-R_i)|L_{-i},X] \frac{\mathbb{I}(R_{(i+1:K)}=1)}{p(R_{(i+1:K)}=1|L_{-i},X)} \left( \frac{R_i}{p(R_i=1|R_{(1:i-1)},R_{(i+1:K)}=1,L_{-i},X)}-1 \right) \\
\times \frac{1}{\textup{OR}(R_i,R_{(1:i-1)}|R_{(i+1:K)} = 1,L_{-i},X)} \E[\Delta(R,L)|R_i=0,L_{-i},X]
\end{align*}
where $\Delta(R,L,X) \equiv \phi_{\text{full}}(\beta) - \E[\phi_{\text{full}}(\beta) |R, L_{(R)},X].$

Interestingly, in the setting with always-observed covariates $X$ we have an estimator that is doubly-robust. Specifically, the estimating function $\phi_{\text{odds}}(\beta)$ above is mean-zero at the true value of $\beta$ in the union model where the odds ratio is correctly specified and for each pattern either the pattern probability $\pi_j(L,X)$ or pattern mixture regression model $\E[B|R=r_j,L_{(r_j)},X]$
is correctly specified, but possibly not both. Let $\mathcal{M}_{\Odds}$ denote the model where $\Odds(R,L|X)$ is correctly specified, $\mathcal{M}_{\pi,j}$ denote the model $p(R=r_j|L=0,X; \psi_j)$ parameterized by $\psi_j$ and $\mathcal{M}_{\text{PM},j}$ denote the model $\E[B|R=r_j,L_{(r_j)}=0,X; \mu_j]$ parameterized by $\mu_j$. Define the union model $\mathcal{M}_{\text{union}} = \cap_j \mathcal{M}_j$, where $\mathcal{M}_j = \left( \mathcal{M}_{\pi,j} \cup \mathcal{M}_{\text{PM},j} \right) \cap \mathcal{M}_{\Odds}$. We use $\psi_{0} = (\psi_{1,0},...,\psi_{J,0})$ and $\mu_{0}=(\mu_{1,0},...,\mu_{J,0})$ to denote the true parameter vectors.

\begin{Thm} \label{thm:DR}
	Let $\phi_{\textup{odds}}(\beta, \textup{$\Odds$}, \mu, \psi)$ as defined above. When $(\psi_0, \mu_0) \in \mathcal{M}_{\textup{union}}$,\\ $\E[\phi_{\textup{odds}}(\beta, \textup{$\Odds$}, \mu_0, \psi_0)] = 0$.
\end{Thm}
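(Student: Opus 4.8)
The plan is to evaluate $\phi_{\text{odds}}$ at the true value of $\beta$ — at which $\E[\phi_{\text{full}}(\beta)]=0$ and, since the full vector $L$ is observed whenever $R=1$, $\E[\phi_{\text{full}}(\beta)\mid R=1,L,X]=\phi_{\text{full}}(\beta)$ — then to rewrite the estimating function so that the complete-case propensity cancels out of the augmentation term, and finally to verify a cancellation pattern by pattern. Throughout I use that $\mathcal{M}_{\Odds}$ holds, so $\Odds(R,L\mid X)$ is the true conditional odds ratio. The key algebraic move is the odds-ratio parameterization: the reconstructed pattern probabilities obey $\pi_j(L,X)/\pi_J(L,X)=\Odds(r_j,L\mid X)\,e_j(X)$ for $j<J$, with $e_j(X)$ the ratio of the baseline-at-$L=0$ response models, a function of $X$ only; hence $1/\pi_J(L,X)=1+\sum_{j<J}\Odds(r_j,L\mid X)e_j(X)$ and, crucially, $\frac{\mathbb{I}(R=1)}{\pi_J(L,X)}\,\pi_j(L,X)=\mathbb{I}(R=1)\,\Odds(r_j,L\mid X)e_j(X)$, so $\pi_J$ disappears. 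Substituting these identities and collecting terms, I would rewrite the estimating function at $(\Odds,\mu_0,\psi_0)$ as
\begin{align*}
\phi_{\text{odds}}(\beta,\Odds,\mu_0,\psi_0)=\;&\mathbb{I}(R=1)\,\phi_{\text{full}}(\beta)\\
&+\sum_{j=1}^{J-1}\Big\{\mathbb{I}(R=1)\,\Odds(r_j,L\mid X)e_j(X)\big(\phi_{\text{full}}(\beta)-m_j\big)+\mathbb{I}(R=r_j)\,m_j\Big\},
\end{align*}
where $m_j$ is the working pattern-mixture regression for $\E[\phi_{\text{full}}(\beta)\mid R=r_j,L_{(r_j)},X]$ built from $\Odds$ and $\mu_{j,0}$, and $e_j$ uses $\psi_{j,0}$.

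I would then show that under $\mathcal{M}_j$ the $j$th brace has expectation $\E[\mathbb{I}(R=r_j)\,\phi_{\text{full}}(\beta)]$, regardless of which of the two working models for pattern $j$ is correct. If $\mathcal{M}_{\text{PM},j}$ holds, $m_j$ is the true conditional mean; pulling the $X$-function $e_j$ outside, using $\E[\mathbb{I}(R=1)\mid L,X]\,\Odds(r_j,L\mid X)=\pi_J(L,X)\Odds(r_j,L\mid X)=\pi_j(L,X)/c_j(X)$ with the true baseline ratio $c_j$, and noting $\E[\mathbb{I}(R=r_j)\{\phi_{\text{full}}(\beta)-m_j\}\mid L_{(r_j)},X]=0$, the first summand has expectation zero and the brace collapses to $\E[\mathbb{I}(R=r_j)m_j]=\E[\mathbb{I}(R=r_j)\phi_{\text{full}}(\beta)]$ by iterated expectations. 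If instead $\mathcal{M}_{\pi,j}$ holds, then $e_j=c_j$, so conditioning the first summand on $(L,X)$ gives $\E[\pi_j(L,X)\{\phi_{\text{full}}(\beta)-m_j\}]=\E[\mathbb{I}(R=r_j)\{\phi_{\text{full}}(\beta)-m_j\}]$, which cancels the $\E[\mathbb{I}(R=r_j)m_j]$ term and again leaves $\E[\mathbb{I}(R=r_j)\phi_{\text{full}}(\beta)]$. Summing over $j=1,\dots,J-1$, adding the complete-case term, and using that the patterns partition the sample, I obtain $\E[\phi_{\text{odds}}(\beta,\Odds,\mu_0,\psi_0)]=\E\big[\phi_{\text{full}}(\beta)\sum_{j=1}^J\mathbb{I}(R=r_j)\big]=\E[\phi_{\text{full}}(\beta)]=0$.

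The step I expect to be the real obstacle is the first one, not the case bookkeeping. Because the odds-ratio parameterization forces every reconstructed $\pi_j$ to share the one normalizing constant implicit in $\pi_J$, a misspecified baseline model for a single pattern would, a priori, contaminate the reconstructed probabilities of all the others and destroy any pattern-by-pattern argument; the point that rescues it is that $\pi_J$ cancels in the augmentation weight, $\frac{\mathbb{I}(R=1)}{\pi_J}\pi_j=\mathbb{I}(R=1)\Odds(r_j,L\mid X)e_j(X)$, so only the pattern-specific piece $e_j$ survives. Getting $\phi_{\text{odds}}$ into the collected display above is therefore the crux; after that the argument is a routine iterated-expectation computation whose only genuine choice is what to condition on first — $(L_{(r_j)},X)$ in the pattern-mixture branch, $(L,X)$ in the propensity branch — and one must just be careful that in each branch precisely one of $m_j$, $e_j$ is the truth and that it is correctness of $\Odds$ that ties the baseline working model to the correct full conditional, so $\mathcal{M}_{\Odds}$ is genuinely used in both cases.
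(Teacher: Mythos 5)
Your proof is correct, and it uses the same two key identities as the paper's proof --- the expansion $1/\pi_J(L,X)=1+\sum_{j<J}\Odds(r_j,L\mid X)\,\pi_j(0,X)/\pi_J(0,X)$ and the odds-ratio representation of the pattern-mixture regression $\E[\phi_{\text{full}}\mid R=r_j,L_{(r_j)},X]=\E[\Odds(r_j,L\mid X)\phi_{\text{full}}\mid R=1,L_{(r_j)},X]/\E[\Odds(r_j,L\mid X)\mid R=1,L_{(r_j)},X]$ --- but it organizes the argument differently, and the difference matters. The paper proves unbiasedness in two \emph{global} cases (all pattern probabilities correct, via a one-line iterated expectation on $(L,X)$; then all pattern-mixture regressions correct, via the $1/\pi_J$ expansion) and then asserts that ``the result for the union model follows.'' The union model as defined allows the correct component to differ across patterns, so that final assertion is really where the work is, and the paper leaves it implicit. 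Your per-pattern decomposition of $\phi_{\text{odds}}$ into braces of the form $\mathbb{I}(R=1)\Odds(r_j,L\mid X)e_j(X)(\phi_{\text{full}}-m_j)+\mathbb{I}(R=r_j)m_j$, each shown to have expectation $\E[\mathbb{I}(R=r_j)\phi_{\text{full}}]$ under $\mathcal{M}_j$ regardless of which working model for pattern $j$ is the correct one, handles the mixed case directly and is therefore the more complete argument for the theorem as stated. Your diagnosis of the crux is also right: the whole thing hinges on $\pi_J$ cancelling out of the augmentation weight so that only the pattern-specific baseline $e_j(X)$ survives, which is exactly what the shared-normalizing-constant parameterization would otherwise seem to forbid. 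One small presentational caveat: in the pattern-mixture branch you route the first summand through the \emph{true} $\pi_j(L,X)$ (via $\pi_J\Odds=\pi_j/c_j$) and then condition on $(L_{(r_j)},X)$ within pattern $j$; the paper instead conditions on $(R=1,L_{(r_j)},X)$ and kills the term with the odds-ratio representation of $m_j$. Both are valid, and both genuinely use correctness of $\Odds$, as you note.
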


Note that the double-robustness property obtained in Theorem \ref{thm:DR} requires that the odds ratio $\Odds(R,L|X)$ is a known function of $L$ and $X$. Since the odds ratio appears in both the pattern mixture regressions and the pattern probabilities $\pi_j(L,X)$, double-robustness in this setting does not protect against \emph{arbitrary} misspecification of either the regression models or pattern probabilities: only components of these models \emph{variationally independent of the odds ratio} may be misspecified without necessarily sacrificing unbiasedness of the estimating equation. Specifically, this implies that at most one of $\pi_j(L=0,X)$ or the regression function $\E[\phi_{\text{full}}|R=r_j,L_{(r_j)}=0,X] = \frac{\E[\Odds(r_j,L|X) \phi_{\text{full}} |R=1, L_{(r_j)}=0, X]}{\E[\Odds(r_j,L|X)|R=1, L_{(r_j)}=0, X]}$ for each pattern may be misspecified. Furthermore, the quantifier over patterns means that for some patterns one may correctly specify only the pattern probability and for other patterns one may only specify the pattern mixture regression without sacrificing unbiasedness.

It is instructive to contrast our double-robustness result with another recently proposed doubly-robust estimator for a MNAR model, the ``discrete choice model'' (DCM) estimator in \cite{tchetgen2018discrete}. The model for the missingness mechanism introduced in that paper is motivated by some behavioral assumptions underlying observed patterns of nonresponse in the data. In that model (which is neither properly a superset nor subset of our model $\mathcal{M}_{\text{nsc}}$), \cite{tchetgen2018discrete} propose an estimator which is doubly-robust in the sense of requiring that either the pattern mixture regression or missingness mechanism is correctly specified. In our case we also require that the odds ratio is correctly specified for each pattern. However, in the special setting where the data only contains complete cases and every ``leave-one-out'' pattern (i.e., patterns where $L_i$ is missing but $L_{-i}$ is observed, for each $i$), then the no self-censoring and DCM models coincide. That is, if every missingness pattern besides the complete case and ``leave-one-out'' patterns have zero probability, then assumption (\ref{eq:NSC}) and the DCM independence assumption place exactly the same restriction on the missingness mechanism. \cite{tchetgen2018discrete} express the DCM assumption as $L_{(-r)} | R=r, L_{(r)} \sim L_{(-r)} | R=1, L_{(r)}$ for all $r \neq 1$ where $L_{(-r)}$ denotes the unobserved subvector of $L$ when $R=r$ (see also \cite{little1993pattern}). With only complete cases and ``leave-one-out'' patterns this is simplifies to $L_i |R_i=0,R_{-i}=1,L_{-i} \sim L_i |R=1,L_{-i}$. In the same setting, the no self-censoring assumption $L_i|R_i=0,R_{-i},L_{-i}  \sim L_i|R_i=1,R_{-i},L_{-i}$ amounts to $L_i|R_i=0,R_{-i}=1,L_{-i}  \sim L_i|R=1,L_{-i}$. Therefore here the models coincide and thus have the same influence function. Moreover, in this restricted pattern setting the odds ratio function $\Odds(R,L)$ is only involved in the missingness mechanism but not in the pattern mixture regression models, so an estimator based on $\phi_{\text{odds}}$ recovers the same double-robustness property in \cite{tchetgen2018discrete}.

In the next section, we propose an estimator for $\beta$ based on parametric specification of each component of the estimating equation. When we use a doubly robust estimator for the odds ratio components, the resulting estimator for $\beta$ will be doubly-robust in the sense just described.

\section{The proposed estimator}

In applications, it is common to specify parametric models for the odds ratio as well as the pattern probabilities $\pi_j(L,X)$ and the pattern mixture regressions, particularly if $L$ has more than two continuous components. A convenient choice may be to assume a logistic model for $R_i$ in terms of two components, one of which appears only in the odds ratio and both of which appear in the pattern probabilities $\pi_j(L,X)$. For example:
\begin{equation} \label{eq:logit}
\logit p(R_i=1|R_{-i}=1,L_{-i}, X; \psi_i) = \delta h_i(L_{-i},X; \psi_{i,LX}) + h_{i,X}(X; \psi_{i,X})
\end{equation}
where $\psi_i = (\psi_{i,LX}, \psi_{i,X})'$. Note that $\delta h_i(L_{-i},X; \psi_{i,LX})$ is a component of the odds ratio and so $\delta h_i(L_{-i},X; \psi_{i,LX})=0$ when $L_{-i}=0$. The second component $h_{i,X}(X; \psi_{i,X})$ is a function of $X$ which is needed for the $\pi_j(L,X)$ but does not appear in the odds ratio. The parameterized odds ratio is then $\Odds(\cdot; \psi_{LX})$ where $\psi_{LX} \equiv ( \psi_{1,LX},...,\psi_{K,LX} )'$ and we write $\widehat{\Odds}(\cdot) \equiv \Odds(\cdot; \hat{\psi}_{LX})$ for the estimated odds ratio. Also let $\psi_X \equiv ( \psi_{1,X},...,\psi_{K,X} )'$, $\psi \equiv ( \psi_{X},\psi_{LX} )'$, and denote each estimated pattern probability by $\pi_j(L,X; \hat{\psi})$. We use $\mu \equiv (\mu_1,\mu_2)'$ to parameterize the pattern mixture regression functions.


We propose a straightforward augmented IPW (AIPW) estimator for $\beta$, where the augmentation term incorporates information from all the missingness patterns. Denote by $\hat{\beta}_{AIPW}$ the solution to:
\begin{align*}
\mathbb{P}_n \left( \frac{\mathbb{I}(R=1)}{\pi_J(L,X; \hat{\psi})} \phi_{\text{full}}(\beta) + \sum_{j=1}^{J-1} \mathbb{I}(R=r_j) \frac{\E[\widehat{\Odds}(r_j,L|X) \phi_{\text{full}}(\beta) |R=1, L_{(r_j)}, X; \hat{\mu}_1]}{\E[\widehat{\Odds}(r_j,L|X)|R=1, L_{(r_j)}, X; \hat{\mu}_2]} \right. \\
\left. - \frac{\mathbb{I}(R=1)}{\pi_J(L,X; \hat{\psi})} \sum_{j = 1}^{J-1} \pi_j(L,X; \hat{\psi}) \frac{\E[\widehat{\Odds}(r_j,L|X) \phi_{\text{full}}(\beta) |R=1, L_{(r_j)},X; \hat{\mu}_1]}{\E[\widehat{\Odds}(r_j,L|X)|R=1, L_{(r_j)},X; \hat{\mu}_2]} \right) = 0
\end{align*}
where $\mathbb{P}_n$ denotes the sample average. This (empirically) solves the estimating equation $\E[\phi_{\text{odds}}(\beta;\hat{\psi}_X, \hat{\mu}, \widehat{\Odds})]=0$ with estimators for all nuisance functions plugged-in. In order to achieve double-robustness, we must use an estimator of the odds ratio which is consistent in the union model. One such estimator, proposed in \cite{tchetgen2010doubly} and \cite{tan2019doubly} for a full data semiparametric problem, can readily be adapted to the missing data setting. The doubly-robust estimator is $\widehat{\Odds}_{dr} \equiv \Odds(\cdot; \hat{\psi}_{LX})$ where each $\hat{\psi}_{i,LX}$ solves the estimating equation $\mathbb{P}_n (r(\psi_{i,LX}; \hat{\psi}_{i,X}, \hat{\mu} )) = 0$ and
\begin{align*}
\lefteqn{r( \psi_{i,LX} ; \hat{\psi}_{i,X}, \hat{\mu} ) =}\\ 
& & \left( R_i - \mbox{expit}(h_{i,X}(X; \hat{\psi}_{i,X})) \right) \left( \frac{\partial \delta h_i(L_{-i},X; \psi_{i,LX}) }{\partial \psi_{i,LX}} - \E \left[\frac{\partial \delta h_i(L_{-i},X; \psi_{i,LX})}{\partial \psi_{i,LX}}|X;\hat{\mu} \right] \right)\\
& & \times \exp\left\{ -\delta h_i(L_{-i},X; \psi_{i,LX}) \right\} \mathbb{I}(R_{-i}=1) 
\end{align*}
The resulting estimator $\hat{\psi}_{LX}$ is consistent if either $p(R_i=1 | R_{-i}=1,L_{-i}=0,X)$ or $\E \left[\frac{\partial \delta h_i(L_{-i},X; \psi_{i,LX})}{\partial \psi_{i,LX}}|R=1, X \right]$ for each $i = 1,...,K$ is correctly specified. In practice, a common choice of functional form for $\delta h_i(L_{-i},X; \psi_{i,LX})$ is linear in $L_{-i}$ (as is assumed in \cite{tan2019doubly}) so the derivatives reduce to $L_{-i}$.

Let $V(\cdot)$ be the vector of stacked estimating equations for parameters $\hat{\beta}_{AIPW}$, $\hat{\psi}_X$, 
$\hat{\mu}$, 
and $\widehat{\Odds}_{dr}$ using the doubly-robust odds ratio estimator above. Let $\Omega = (\beta, \psi_X, \mu, \Odds)'$ be the combined set of parameters. Ultimately our procedure solves the estimating equation $\mathbb{P}_n \left[ V(\widehat{\Omega}) \right] = 0$.

\begin{Thm} \label{thm:estimator}
	In the union model $\mathcal{M}_{\textup{union}}$, $\widehat{\Omega}$ is consistent and asymptotically normal with influence function $\E \left[ \frac{ \partial V(\Omega)}{ \partial \Omega } \right]^{-1} V(\Omega)$. 
\end{Thm}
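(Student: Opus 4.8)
The plan is to recognize $\widehat{\Omega}$ as an ordinary $Z$-estimator defined by the stacked estimating equation $\mathbb{P}_n[V(\widehat{\Omega})] = 0$ and to verify the hypotheses of classical estimating-equation theory (see, e.g., \citet{van2000asymptotic,tsiatis2006semiparametric}). The argument splits into three pieces: (i) identifying the pseudo-true parameter $\Omega_0$ and showing $\E[V(\Omega_0)] = 0$ on $\mathcal{M}_{\textup{union}}$; (ii) proving consistency $\widehat{\Omega}\to_p\Omega_0$; and (iii) obtaining asymptotic linearity by a one-term Taylor expansion.

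For (i), recall that $V$ stacks four blocks: the estimating function $\phi_{\textup{odds}}(\beta;\psi_X,\mu,\Odds)$ for $\beta$ with all nuisances plugged in, the scores of the logistic working models $h_{i,X}(X;\psi_{i,X})$ for $\psi_X$, the estimating equations defining the pattern-mixture regression parameters $\mu$, and the doubly-robust functions $r(\psi_{i,LX};\psi_{i,X},\mu)$ for $\psi_{LX}$. Under possible misspecification I define each component of $\Omega_0$ as the probability limit of the corresponding estimator, i.e.\ the solution of the population version of that block's own estimating equation (the ``least-false'' value). The $\psi_X$-, $\mu$-, and $\psi_{LX}$-blocks are unbiased at their own pseudo-true values by construction. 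In particular, the doubly-robust structure of $r(\psi_{i,LX};\cdot)$ ensures that $\widehat{\Odds}_{dr}\to_p\Odds(\cdot;\psi_{LX,0})$ equals the \emph{true} conditional odds ratio whenever, for each $i$, at least one of $p(R_i=1\mid R_{-i}=1,L_{-i}=0,X)$ or $\E[\partial\delta h_i/\partial\psi_{i,LX}\mid R=1,X]$ is correctly specified --- which holds on $\mathcal{M}_{\Odds}$ within $\mathcal{M}_{\textup{union}}$. Given that the odds ratio converges to the truth and that for every pattern $j$ either $\pi_j(L,X)$ or the pattern-mixture regression is correctly specified, Theorem \ref{thm:DR} delivers exactly $\E[\phi_{\textup{odds}}(\beta_0,\Odds_0,\mu_0,\psi_{X,0})]=0$. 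Hence the $\beta$-block is unbiased and $\E[V(\Omega_0)]=0$.

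Parts (ii) and (iii) are then routine. Under the positivity bound \eqref{eq:positivityX}, the inverse-probability factors are bounded, so with mild moment conditions on $\phi_{\textup{full}}$ and smoothness of the working models the class $\{V(\Omega)\}$ over a neighborhood of $\Omega_0$ is Glivenko--Cantelli; combined with $\E[V(\Omega_0)]=0$ being an isolated zero (which follows from invertibility of $\E[\partial V/\partial\Omega]$ at $\Omega_0$), the standard $Z$-estimator consistency argument gives $\widehat{\Omega}\to_p\Omega_0$. Expanding $0 = \mathbb{P}_n V(\widehat{\Omega})$ around $\Omega_0$ yields $\sqrt{n}(\widehat{\Omega}-\Omega_0) = -\bigl[\mathbb{P}_n\tfrac{\partial}{\partial\Omega}V(\bar{\Omega})\bigr]^{-1}\sqrt{n}\,\mathbb{P}_n V(\Omega_0)$ for some $\bar{\Omega}$ between $\widehat{\Omega}$ and $\Omega_0$; consistency plus a uniform law of large numbers for the derivative gives $\mathbb{P}_n\tfrac{\partial}{\partial\Omega}V(\bar{\Omega})\to_p\E[\tfrac{\partial}{\partial\Omega}V(\Omega_0)]$, invertible by assumption, while the CLT applies to $\sqrt{n}\,\mathbb{P}_n V(\Omega_0)$ since it is centered with finite variance. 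Therefore $\widehat{\Omega}$ is asymptotically normal and asymptotically linear with influence function $\E[\partial V(\Omega)/\partial\Omega]^{-1}V(\Omega)$, as claimed.

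The main obstacle is the bookkeeping in part (i): the blocks of $V$ are \emph{triangular} in the nuisances --- $\widehat{\Odds}_{dr}$ depends on $(\hat{\psi}_X,\hat{\mu})$ and $\hat{\beta}_{AIPW}$ depends on all of $(\hat{\psi}_X,\hat{\mu},\widehat{\Odds}_{dr})$ --- so one must verify that the pseudo-true limits compose correctly and that each block's (double-)robustness claim is stated conditionally on the relevant upstream nuisances attaining their stated limits. Once this is organized, stacking everything into a single $V$ removes all remaining difficulty: the joint M-estimator automatically accounts for estimation of the nuisances, and the surviving hypotheses (smoothness and domination of the estimating functions near $\Omega_0$, invertibility of $\E[\partial V/\partial\Omega]$) are the usual regularity conditions invoked throughout this literature.
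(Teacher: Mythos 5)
Your proposal is correct and follows essentially the same route as the paper's proof, which simply invokes Theorem \ref{thm:DR} for unbiasedness of the stacked estimating equations and then appeals to standard $Z$-estimator theory for the expansion $\sqrt{n}(\widehat{\Omega}-\Omega_0) = -\E[\partial V(\Omega_0)/\partial\Omega_0]^{-1}\sqrt{n}\,\mathbb{P}_n(V(\Omega_0)) + o_p(1)$. Your additional bookkeeping on the triangular nuisance structure and the pseudo-true limits is a more careful spelling-out of what the paper leaves implicit under ``the usual regularity conditions,'' not a different argument.
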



Thus, the proposed estimator has the benefit of being simple to implement and doubly-robust with respect to the always-observed covariates. We note that implementing an estimator based on the nonparametric IF $\phi_{\text{nsc}}$ would be asymptotically more efficient when all parametric models are correctly specified, but also considerably more complicated to specify correctly because of the $p(R_{-i}=1|L_{-i},X)$ terms; each of these would require correctly specifying a joint distribution and then marginalizing by integrating or summing $L_i$ (that is, calculating $p(L_{-i}) = \int_{l_i}p(L)dL_i$). As far as we are able to determine, although locally semiparametric efficient in $\mathcal{M}_{\text{nsc}}$, an estimator based on $\phi_{\text{nsc}}$ fails to be doubly-robust as the entire missing data mechanism must be correctly specified. 
So, in the interests of ease-of-implementation and double-robustness, we propose the simpler $\hat{\beta}_{AIPW}$ and expore its performance with simulations in the next section.

\section{Simulation study}

In our simulations, we specifically consider the case where $K=3$ and $\beta \equiv \E[L_3]$, i.e., we are simply interested in the marginal mean of the ``outcome'' variable, $L_3$. First we examine the setting where all variables are sometimes missing ($X$ is empty) and then the setting with an always-observed vector $X = (X_1,X_2)$. In both cases, we sample $(R,L,X)$ from a conditional Gaussian chain graph model satisfying the no self-censoring assumption; that is, missingness paterns $R=r$ are sampled according to a multinomial distribution and $L,X|R=r$ is normal $N(\mu_0(r),\Sigma_0)$. With this parametric data-generating process, imposing the no self-censoring assumption amounts to setting certain mean parameters (interaction terms) to zero; see \citet[p.~119-120]{hojsgaard2012graphical}. The precise parameter values chosen for the simulation study are detailed in the appendix. We also carried out versions of the following simulations with binary data; those results are deferred to the appendix.

\subsection{Setting 1}

In Figure \ref{sim1}, we compare the performance of our AIPW estimator against a popular multiple imputation method for missing data: multivariate imputation by chained equations or MICE \citep{buuren2010mice}. Though MICE uses a series of flexible models to impute missing values based on covariate information, the consistency of this imputation procedure depends on the assumption that missing data mechanism is MAR. Here we have generated data that are MNAR, satisfying assumption (\ref{eq:NSC}), and therefore, as expected, MICE performs poorly despite the quite simple parametric model for the full data. 
With all nuisance models correctly specified, the proposed AIPW estimator is seen to be unbiased.

\begin{figure}[t!] 
	\begin{center}
		\includegraphics[scale=.50]{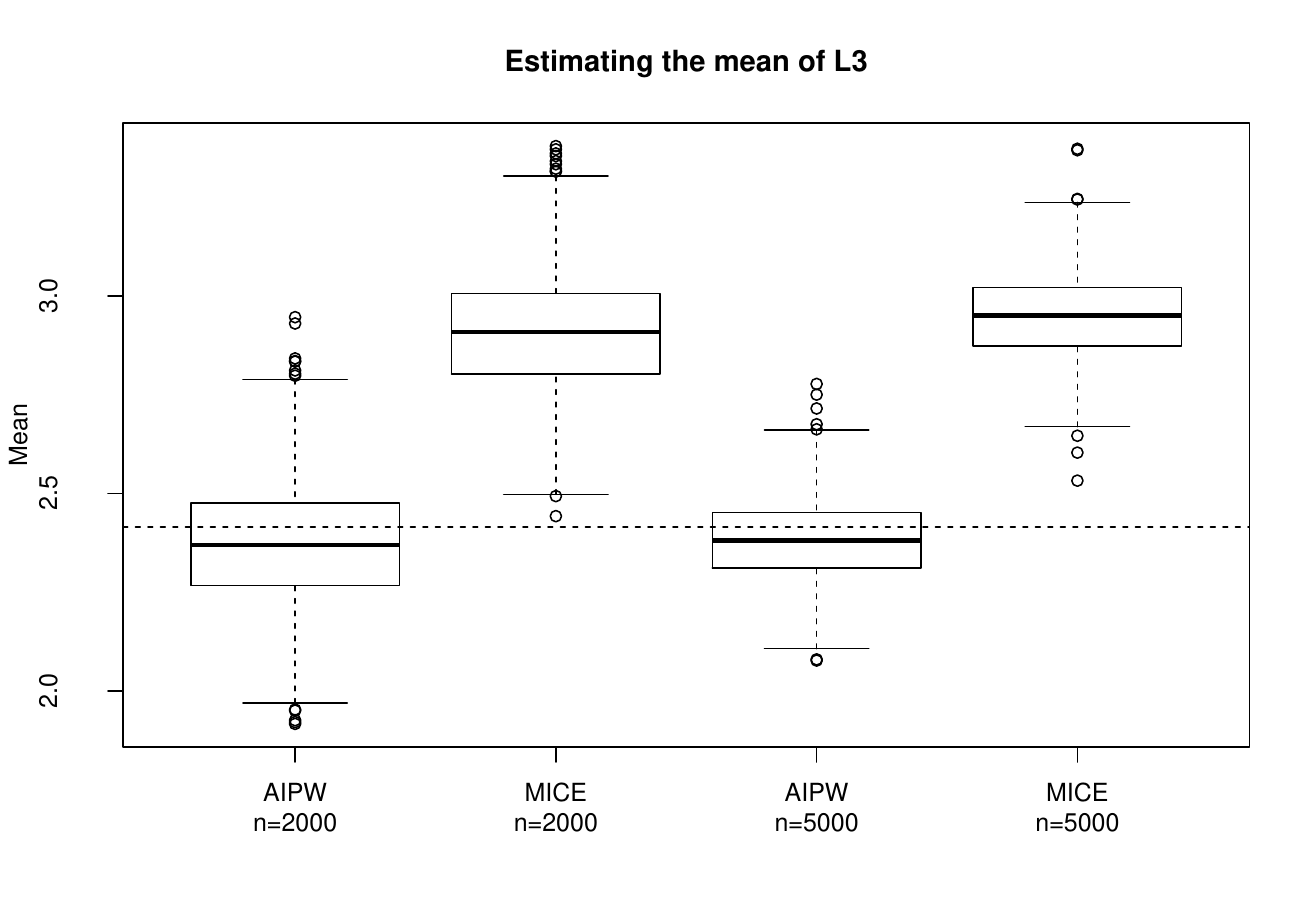} 
	\end{center}
	\caption{Estimates of $\E[L_3]$ in the first simulation setting. Boxplots are calculated from 1000 trials at sample sizes $n=2000$ and $n=5000$. The horizontal dashed line indicates the true value, which is 2.415593. MICE estimates are clearly biased upwards. 
		AIPW estimates, as expected, concentrate around the true value.}
	\label{sim1}
\end{figure}

\subsection{Setting 2}

In the second setting, with always observed vector $X = (X_1,X_2)$, we explored the double-robustness property. From the conditional Gaussian form of the data-generating process, we know that the outcome regressions are correctly specified as log-linear functions of $L_{(r)}$ and $X$. Likewise, the logit probability of each missingness indicator $R_i$ in eq.\ (\ref{eq:logit}) is a linear function of $L_{-i}$ and $X$. To illustrate our double-robustness result, these nuisance models were misspecified by replacing $(X_1,X_2)$ with 
$(\log(\frac{1}{X_1}+ \frac{1}{X_2}), \sqrt{X_1X_2})$
in all outcome regressions and missingness probabilities. We used the doubly-robust estimator $\widehat{\Odds}_{dr}$ for the odds ratio. Results for 1000 trials with $n=5000$ samples are shown in Table 1. The AIPW estimator is seen to be unbiased when either the outcome regressions or missingness probabilities are misspecified, but not both.

\begin{table}[t!]
	\begin{center}
		\begin{tabular}{||c | c | c | c | c ||} 
			\hline
			\textbf{Setting} & \textbf{Bias} & \textbf{Percent Bias} & \textbf{MSE} & \textbf{Var} \\
			\hline\hline
			Outcome Reg. & -0.16 & 0.98 & 0.14 & 0.12 \\
			Misspec. & & & &\\
			\hline
			Missingness Prob. & -0.16 & 0.96 & 0.039 & 0.013 \\
			Misspec. & & & &\\
			\hline
			Both & 0.96 & 5.75 & 7.68 & 6.77 \\
			Misspec. & & & &\\
			\hline
			Both & -0.068 & 0.41 & 0.020 & 0.016 \\
			Correct & & & &\\
			\hline
		\end{tabular}
		\caption{Simulation results illustrating the double-robustness property. The true value of $\E[L_3]$ is 16.71512 and sample size is $n=5000$. Bias, percent bias, mean squared error, and variance are calculated over 1000 trials.}
	\end{center}
\end{table}


\section{An application to HIV data}

We applied the proposed AIPW estimator to data from an observational study of HIV-positive mothers in Botswana. Specifically, we are interested in the relationship between continuing highly active antiretroviral therapy (HAART) during pregnancy and adverse birth outcomes, such as preterm delivery. The full data set, abstracted from 6 sites in Botswana, is described in detail in \cite{chen2012highly}. Following \cite{tchetgen2018discrete}, \cite{sun2018inverse}, and \cite{shpitser2016consistent} we focus on HIV-positive women ($n=9711$) and 3 variables: HAART exposure during pregnancy, preterm delivery, and an indicator of low CD4$^+$ count (less than 200 $\mu L$). The question of interest is whether HAART continuation (68.9\% missing) is associated with premature delivery (6.7\% missing), satfified by CD4$^+$ count (53.4\% missing). The analysis is complicated by a small number of complete cases (10.5\%) and nonmonotone patterns of missingness. 

By way of estimating the association of interest and as further illustration of the proposed approach, we first obtain an estimate of the joint distribution of the variables of interest. In Table 2, we compare the joint distribution as estimated by our proposed AIPW estimator with complete-case analysis as well as MICE. Substantial differences between our AIPW procedure and complete case analysis are evident, for example in the probability of observing HAART continutation, no low CD4$^+$ count, and no preterm delivery (third column). Differences with MICE are less pronounced. From the estimated joint distribution, we can obtain an estimate of the odds ratios between HAART and preterm delivery at both levels of low CD4$^+$ count: these are 2.72 (95\% CI: 1.26, 5.53) and 1.24 (95\% CI: 0.91, 1.73) for low CD4$^+$ count and moderate or high CD4$^+$ count, respectively. Ninety-five percent confidence intervals in parentheses are computed by bootstrap (percentile) over 1000 subsamples. Using the same procedure, IPW (i.e., our estimator with pattern mixture regression models set to zero) produces estimates 2.51 (95\% CI: 1.16, 4.95) and 1.15 (95\% CI: 0.83, 1.63) respectively, MICE produces estimates 1.60 (95\% CI: 1.09, 2.90) and 1.52 (95\% CI: 0.98, 1.70), while compete-case analysis produces 2.16 (95\% CI: 0.98, 4.41) and 1.00 (95\% CI: 0.71, 1.42) respectively. It is interesting to compare our AIPW estimates to the results of \cite{tchetgen2018discrete}, who analyze the same data with the aforementioned discrete choice model estimator. Their analysis is based on a different MNAR assumption than the one considered here, as discussed in Section 5. They report an estimated odds ratio association of 1.158 (95\% CI: 0.869, 1.560) under a main effect-only logistic regression of preterm delivery on HAART continuation and CD4$^+$ count. Similarly, \cite{shpitser2016consistent} also reports an estimated odds ratio association of 1.032 (95\% CI: 0.670, 1.394) using his pseudolikelihood-based IPW estimator under the no self-censoring assumption. Neither analysis by Tchetgen Tchetgen et al.\ and Shpitser was able to detect a significant association between preterm delivery and HAART continuation conditional on CD4$^+$ count; in contrast, the proposed AIPW estimator detected a significant association 
for mothers with low CD4$^+$ count. This underscores how the assumed missingness model may have quite important implications for the substantive scientific conclusions or policy recommendations supported by the data.

\begin{table}[t!]
	\begin{center}
		\begin{tabular}{||c | c | c | c | c | c | c | c | c||} 
			\hline
			H,C,P & 1,1,1 & 1,1,0 & 1,0,0 & 1,0,1 & 0,1,1 & 0,1,0 & 0,0,0 & 0,0,1\\
			\hline\hline
			AIPW & 0.0122 & 0.0268 & 0.5206 & 0.1722 & 0.0091 & 0.0542 & 0.1618 & 0.0430 \\
			\hline
			MICE & 0.0164 & 0.0355 & 0.5156 & 0.1705 & 0.0140 & 0.0483 & 0.1641 & 0.0357\\
			\hline
			Complete Cases & 0.0137 & 0.0342 & 0.3320 & 0.0979 & 0.0333 & 0.1802 & 0.2380 & 0.0705\\
			
			\hline
		\end{tabular}
		\caption{Estimated joint distribution of HAART continuation during pregnancy (H), low CD4$^+$ count (C), and preterm delivery (P) in HIV-infected women in Botswana: comparing complete case analysis and MICE with estimation by AIPW.}
	\end{center}
\end{table}

\section{Discussion}

We have introduced a practical and straightforward-to-implement AIPW estimator for functions of data missing-not-at-random, under the ``no self-censoring'' or ``itemwise conditionally indendent nonresponse'' assumption, which places no restrictions on the observed data. Our estimator improves on the efficiency and flexibility of previously proposed estimators \citep{shpitser2016consistent, sadinle2017itemwise} and when a subset of covariates are always observed, enjoys a certain double-robustness property (provided the odds ratio function encoding the association between $R$ and $L$ is correctly specified). We demonstrated in simulations that the estimator is an attractive alternative to popular multiple imputation procedures when the missing data mechanism is MNAR. Our analysis of HIV data from Botswana demonstrates that the proposed estimator can potentially make an important practical difference in applied problems with acute missingness. 

\section{Appendix}

\subsection{Parameter settings for simulation study}

For $K=3$ missing variables, missingness patterns $(1,0,0), (0,1,0), (1,1,0), (0,0,1), (1,0,1),$ $(0,1,1),  (1,1,1),$ and $(0,0,0)$ were sampled with probabilities $0.169, 0.153, 0.136, 0.119, 0.102,$ $0.085, 0.169,$ and $0.068$ respectively (rounded to the third decimal place). In the setting with $X$ empty:
\begin{equation*}
\Sigma_0 = \begin{pmatrix}
4.4 & 1.3 & -2.8\\
1.3 & 3.2 & 1.3\\
-2.8 & 1.3 & 3.5
\end{pmatrix}
\end{equation*}
and $\mu_0(r) = \Sigma_0 h(r)$ where
\begin{align*}
h(1,0,0) &= (1.4,1.6,0.9)'\\
h(0,1,0) &= (1.9,1.1,1.4)'\\
h(1,1,0) &= (1.9,1.6,0.2)'\\
h(0,0,1) &= (0.5,1.9,2.1)'\\
h(1,0,1) &= (0.5,2.4,0.9)'\\
h(0,1,1) &= (1.0,1.9,1.4)'\\
h(1,1,1) &= (1.0,2.4,0.2)'\\
h(0,0,0) &= (1.4,1.1,2.1)'
\end{align*}
In the setting with $X=(X_1,X_2)$:
\begin{equation*}
\Sigma_0 = \begin{pmatrix}
3.88 & 2.66 & 1.24 & 1.60 & 0.30\\
2.66 & 3.24 & 2.66 & 2.26 & 0.96\\
1.24 & 2.66 & 3.70 & 1.64 & 0.64\\
1.60 & 2.26 & 1.64 & 2.00 & 0.60\\
0.30 & 0.96 & 0.64 & 0.60 & 1.70
\end{pmatrix}
\end{equation*}
and
\begin{align*}
h(1,0,0) &= (1.4,1.6,0.9,2.05,4.15)'\\
h(0,1,0) &= (1.9,1.1,1.4,2.6,2.6)'\\
h(1,1,0) &= (1.9,1.6,0.2,2.6,3.7)'\\
h(0,0,1) &= (0.5,1.9,2.1,3.0,2.7)'\\
h(1,0,1) &= (0.5,2.4,0.9,2.95,3.75)'\\
h(0,1,1) &= (1.0,1.9,1.4,3.8,2.1)'\\
h(1,1,1) &= (1.0,2.4,0.2,3.45,3.45)'\\
h(0,0,0) &= (1.4,1.1,2.1,1.75,3.35)'
\end{align*}
It is straightforward to confirm with these parameter settings that the no self-censoring assumption is satisfied in the data generated.

\subsection{Additional simulation results}

Here we report the results of additional simulation experiments with binary data, similar to Setting 1 and Setting 2 in the main paper. Again, $K=3$ but in this case the target parameter is $\E[L_1\times L_2\times L_3]$ (the probability that all $3$ covariates take the value $1$). Binary data was generated to satisfy the no self-censoring assumption with and without always-observed covariates $X$ by sampling from the joint distribution $p(R,L,X)$. This joint was parameterized using a version of the aforementioned odds ratio factorization \citep{chen2010compatibility}. The $\delta h_i(L_{-i},X)$ terms in $\Odds(R,(L,X))$ were specified as follows:

\begin{align*}
\delta h_1(L_{-1},X) &= -0.8(1-L_2) + 0.6L_3 + 0.5 X_1 + 0.7 X_2 + 0.7 X_1X_2\\
\delta h_2(L_{-2},X) &= -0.8(1-L_1) + 0.7L_3 - 0.7 (1-X_1) - 0.5 X_2 + 0.7 X_1X_2\\
\delta h_3(L_{-3},X) &= 0.5L_1 + 0.5(1-L_2) + 0.2 X_1 - 0.9 X_2 + 0.7 X_1X_2
\end{align*}
with $X_1,X_2$ absent whenever $X$ is empty.
 
In the setting with no always-observed covariates, the results are displayed in Figure \ref{sim1b}. The AIPW estimator concentrates around the true value. In this binary setting MICE also does well. We note that complete case analysis here does very poorly: the complete case estimator is severely biased with a mean of approximately $0.167$ while the true value is $0.321$.  

\begin{figure}[h!] 
	\begin{center}
		\includegraphics[scale=.50]{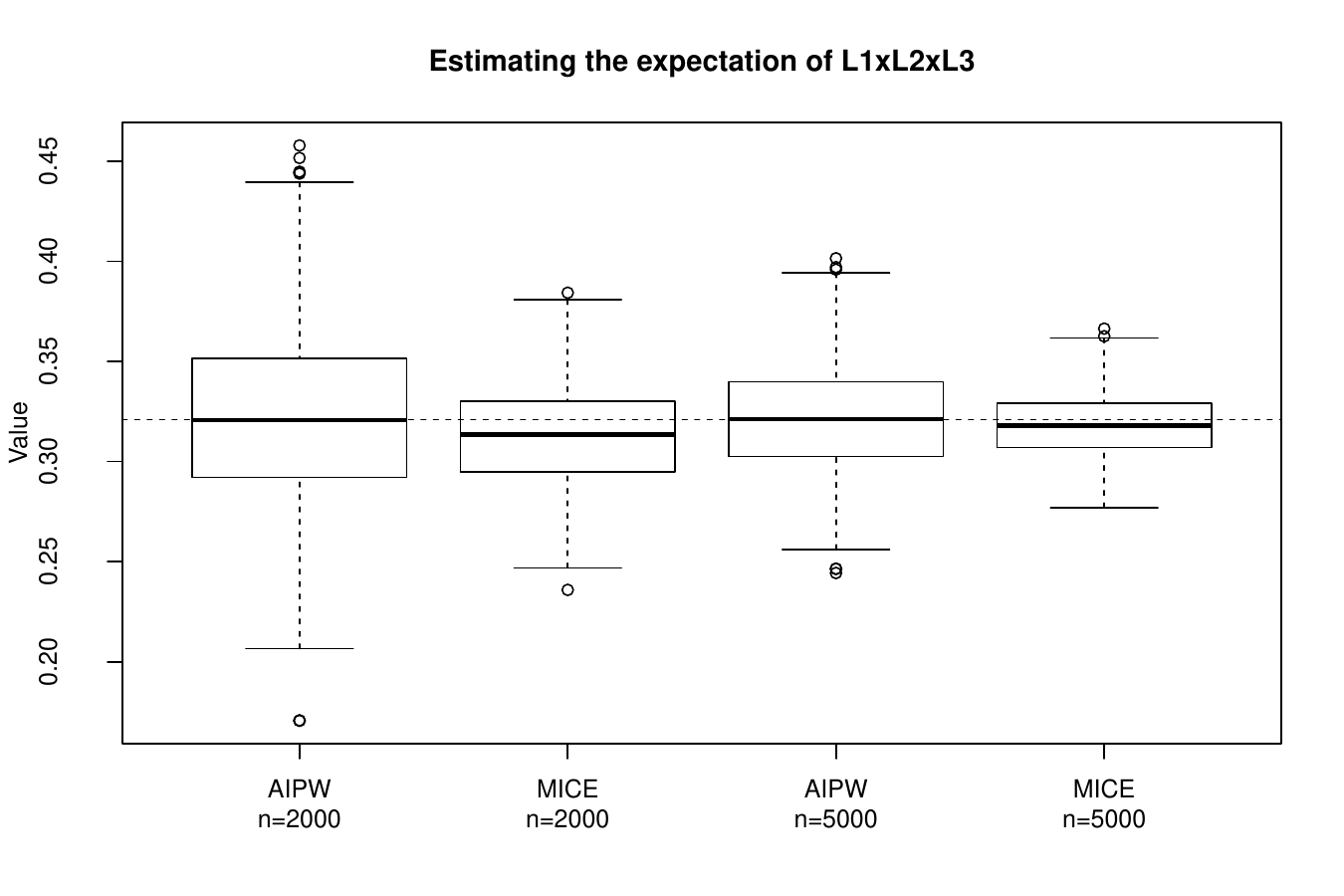} 
	\end{center}
	\caption{Estimates of $\E[L_1 \times L_2 \times L_3]$ in the binary simulation setting. Boxplots are calculated from 1000 trials at sample sizes $n=2000$ and $n=5000$. The horizontal dashed line indicates the true value, which is $0.3209396$.}
	\label{sim1b}
\end{figure}

Results for the setting with always-observed covariates $X$ are reported in Table 3. We see the same general pattern as in the case with continuous variables: the proposed estimator is (approximately) unbiased when either the outcome regressions or missingness probabilities are misspecified, but not both.

\begin{table}[h!]
	\begin{center}
		\begin{tabular}{||c | c | c | c | c ||} 
			\hline
			\textbf{Setting} & \textbf{Bias} & \textbf{Percent Bias} & \textbf{MSE} & \textbf{Var} \\
			\hline\hline
			Outcome Reg. & -0.020 & -7.16 & 0.0039 & 0.0035 \\
			Misspec. & & & &\\
			\hline
			Missingness Prob. & -0.020 & -7.09 & 0.0037 & 0.0033 \\
			Misspec. & & & &\\
			\hline
			Both & -0.037 & -13.25 & 0.0043 & 0.0029 \\
			Misspec. & & & &\\
			\hline
			Both & -0.0045 & -1.6 & 0.0034 & 0.0034 \\
			Correct & & & &\\
			\hline
		\end{tabular}
		\caption{Simulation results in the binary setting. The true value of $\E[L_1\times L_2\times L_3]$ is 0.280638 and sample size is $n=5000$. Bias, percent bias, mean squared error, and variance are calculated over 1000 trials.}
	\end{center}
\end{table}

\subsection{Estimation of the pairwise odds ratios}

The pairwise odds ratios in (\ref{eq:missingnessID}) for $K=3$ can be parameterized as $\theta_1(L_1) = \Odds(R_2,R_3|R_1=1,L_1)$, $\theta_2(L_2) = \Odds(R_1,R_3|R_2=1,L_2)$, $\theta_3(L_3) = \Odds(R_1,R_2|R_3=1,L_3)$, and $\theta_4 = \Gamma(R_1,R_2,R_3)$. Thus, for $K=3$, we have the following four estimating equations.


The estimating equation for $\theta_1$ is:
\begin{equation*}\begin{aligned}
U(\theta_1) &= g_1(L_1) \left[ \frac{R_1R_2R_3}{p(R=1|L)}p(R_1=1,R_2=0,R_3=0|L) - R_1(1-R_2)(1-R_3) \right]\\
&= g_1(L_1) \left[ \frac{R_1R_2R_3}{\prod_{i=1}^{3}p(R_i=1|R_{-i},L_{-i})}\exp(\theta_1(L_1)) p(R_1=1|R_{-1}=1,L_{-1})p(R_2=0|R_{-2}=1,L_{-2}) \right.\\
&\left. \times p(R_3=0|R_{-3}=1,L_{-3}) - R_1(1-R_2)(1-R_3) \Big] \right.
\end{aligned} \end{equation*}
for arbitrary function $g_1$. Similarly,
\begin{equation*}\begin{aligned}
U(\theta_2) &= g_2(L_2) \left[ \frac{R_1R_2R_3}{p(R=1|L)} p(R_1=0,R_2=1,R_3=0|L) - (1-R_1)R_2(1-R_3) \right] \\
&=  g_2(L_2) \left[ \frac{R_1R_2R_3}{\prod_{i=1}^{3}p(R_i=1|R_{-i},L_{-i})} \exp(\theta_2(L_2)) p(R_1=0|R_{-1}=1,L_{-1})p(R_2=1|R_{-2}=1,L_{-2}) \right.\\
&\left. \times p(R_3=0|R_{-3}=1,L_{-3}) - (1-R_1)R_2(1-R_3) \Big] \right.
\end{aligned} \end{equation*}
\begin{equation*}\begin{aligned}
U(\theta_3) &= g_3(L_3) \left[ \frac{R_1R_2R_3}{p(R=1|L)} p(R_1=0,R_2=0,R_3=1|L) - (1-R_1)(1-R_2)R_3 \right]\\
&= g_3(L_3) \left[ \frac{R_1R_2R_3}{\prod_{i=1}^{3}p(R_i=1|R_{-i},L_{-i})} \exp(\theta_3(L_3)) p(R_1=0|R_{-1}=1,L_{-1})p(R_2=0|R_{-2}=1,L_{-2}) \right. \\
& \left. \times p(R_3=1|R_{-3}=1,L_{-3}) - (1-R_1)(1-R_2)R_3 \Big] \right.
\end{aligned} \end{equation*}
and
\begin{equation*} \begin{aligned}
U(\theta_4) &= \frac{R_1R_2R_3}{p(R=1|L)} p(R_1=0,R_2=0,R_3=0|L) - (1-R_1)(1-R_2)(1-R_3)\\
&= \frac{R_1R_2R_3}{\prod_{i=1}^{3}p(R_i=1|R_{-i},L_{-i})} \exp(\theta_1(L_1) 
+ \theta_2(L_2) 
+ \theta_3(L_3) + \theta_4) \prod_{i=1}^3 p(R_i=0|R_{-i}=1,L_{-i})\\
&- (1-R_1)(1-R_2)(1-R_3)
\end{aligned} \end{equation*}
for arbitrary functions $g_2,g_3$. It is easy to see that $\E[U(\theta_i)]=0$ for each $i$. For arbitrary $K$, the number of parameters and number of estimating equations grows rapidly, since one must consider $K$-way interactions, all $K$-1-way interactions, and so on. 

In the setting with always-observed covariates $X$ and pairwise odds ratios such as $\Odds(R_i,R_j|R_k=1,X)$, each estimating equation above becomes a system of simultaneous estimating equations with dimension depending on the size of $X$. For example, corresponding to $\theta_1$ we solve $\E[U(\theta_1)]=0$ where:
\begin{equation*}\begin{aligned}
U(\theta_1) &= 
g_1(L_1,X) \left[ \frac{R_1R_2R_3}{\prod_{i=1}^{3}p(R_i=1|R_{-i},L_{-i},X)}
\exp(\theta_1(L_1,X)) \right. \\
& \times p(R_1=1|R_{-1}=1,L_{-1},X)p(R_2=0|R_{-2}=1,L_{-2},X)p(R_3=0|R_{-3}=1,L_{-3},X)\\
&- \left. R_1(1-R_2)(1-R_3) \Big] \right.
\end{aligned} \end{equation*}
One may proceed similarly for the other pairwise odds ratios.

\subsection{Proofs and derivations}

We begin by discussion the chain graph Markov factorization in order to connect Theorem~\ref{thm:missingnessID} to graphical concepts.

To state the Markov factorization we need several graphical definitions. A block $B \subseteq V$ is defined as a maximal set of vertices such that every vertex pair in the set is connected by an undirected path. The set of blocks in a graph $\mathcal{G}$ is denoted $\mathcal{B(G)}$. A clique $C$ is defined as a maximal set of vertices that are pairwise connected by undirected edges and the set of cliques in $\mathcal{G}$ is denoted $\mathcal{C(G)}$. Let $\text{pa}_i(\mathcal{G})$ denote the parents of vertex $V_i$ in $\mathcal{G}$ and we apply this definition disjunctively to sets so $\text{pa}_B(\mathcal{G})$ is the union of the parents of all vertices in $B$. For a block $B$ we denote the boundary by $\text{bd}_B(\mathcal{G}) = B \cup \text{pa}_B(\mathcal{G})$. We define the augmented graph $\mathcal{G}^a$ to be an undirected graph constructed from $\mathcal{G}$ by replacing all directed edges with undirected edges and connecting all vertices in $\text{pa}_B(\mathcal{G})$ for every block $B$ in $\mathcal{G}$ by undirected edges. $\mathcal{G}_{\text{bd}_B(\mathcal{G})}^a$ is the augmented graph constructed from the induced subgraph over $\text{bd}_B(\mathcal{G})$. Let $\mathcal{C}_B^*(\mathcal{G}) = \{C \in \mathcal{C}(\mathcal{G}_{\text{bd}_B(\mathcal{G})}^a) : C \not \subseteq \text{pa}_B(\mathcal{G}) \}$. The chain graph Markov factorization consists of two parts:
\begin{equation}
p(V) = \prod_{B \in \mathcal{B(G)}} p(B | \text{pa}_B(\mathcal{G}))
\end{equation}
and
\begin{equation}
p(B | \text{pa}_B(\mathcal{G})) = \frac{1}{Z(\text{pa}_B(\mathcal{G}))} \prod_{C \in \mathcal{C}_B^*(\mathcal{G})} \phi_C(C)
\label{eq:innerMarkov}
\end{equation}
for each block $B$, where $\phi_C(C)$ is a ``clique potential function'' depending only on $C$ and $Z(\text{pa}_B(\mathcal{G}))$ is a normalizing factor \citep[cf.][p.~53]{lauritzen1996graphical}.

For the chain graph in Figure \ref{fig:chain} with $K=3$ there are four blocks: $R=(R_1,R_2,R_3)$ is a block and each of $L_1,L_2,L_3$ is a singleton block. The ``inner'' factorization (\ref{eq:innerMarkov}) is only non-trivial for the block $R=(R_1,R_2,R_3)$ where this amounts to:
\begin{equation}
p(R | L) = \frac{1}{Z(L)} \prod_{C \in \mathcal{C}_R^*(\mathcal{G})} \phi_C(C).
\end{equation}
The cliques in $\mathcal{C}_R^*(\mathcal{G})$ are $C_1 = (R_1,L_2,L_3)$, $C_2 = (R_2,L_1,L_3)$, $C_3 = (R_3,L_1,L_2)$, $C_4 = (R_1,R_2,L_3)$, $C_5 = (R_2,R_3,L_1)$, $C_6 = (R_1,R_3,L_2)$, $C_7 = (R_1,R_2,R_3)$. Note that even though $(L_1,L_2,L_3)$ is a clique, this does not appear in the set $\mathcal{C}_R^*(\mathcal{G})$ since it is equal to $\text{pa}_R(\mathcal{G})$. So, there can be no potential function depending on $L=(L_1,L_2,L_3)$. Comparing terms to the odds ratio factorization for $K=3$ in Section 3, we see that cliques $C_1,C_2,C_3$ correspond to the conditional probability terms $p(R_i|R_{-i}=1,L_{-i})$ $i=1,..,3$ and cliques $C_4,C_5,C_6$ correspond to the pairwise odds ratios that each depend on one $L_i$. Finally the 3-way interaction term $\Gamma(R_1,R_2,R_3|L)$ must be independent of $L$ since there can be no potential function depending on $L$ as just mentioned; this term corresponds to $C_7$. It follows that the numerator, and thus the normalizing constant, is identified. We now give a more general algebraic proof of the Theorem for arbitrary $K$ without relying on any graphical concepts. 

\begin{proof}[Proof of Theorem~\ref{thm:missingnessID}]
	We need to show that the numerator (and thus the normalizing constant in the denominator) is a function of only observed data. It is immediate from (\ref{eq:NSC}) that for each $i$, $p(R_i|R_{-i}=1,L) = p(R_i|R_{-i}=1,L_{-i})$ is a function of observed data. For each $i \in \{3,...,K\}$, 
	\begin{equation*}
	\begin{aligned}
	&\Odds(R_i,(R_1,...,R_{i-1})|(R_{i+1},...,R_K) = 1,L)\\
	&= \Odds(R_i,R_{i-1}|R_{-(i-1,i)} = 1,L) \Odds(R_i,(R_1,...,R_{i-2})|(R_{i+1},...,R_K) = 1,R_{i-1},L)
	\end{aligned}
	\end{equation*}
	where 
	$R_{-(i-1,i)}$ is shorthand for $(R_1,...,R_{i-2},R_{i+1},...,R_K)$. We may apply this expansion inductively (i.e., expand out the second term in the product) for all odds ratios that appear in $\prod_{i=2}^K \Odds(R_i,(R_1,...,R_{i-1})|(R_{i+1},...,R_K) = 1,L)$ 
	and re-arrange terms to eventually arrive at a representation of the numerator in terms of pairwise odds ratios and higher-order interaction terms, i.e., the numerator takes the form:
	\begin{equation*}
	\begin{aligned}
		&\prod_{i=1}^{K} p(R_i | R_{-i}=1,L) \prod_{(i,j)} \Odds(R_i,R_j|R_{-(i,j)}=1,L) \times \\
		 &\prod_{(i,j,k)} \Gamma(R_i,R_j,R_k|R_{-(i,j,k)}=1,L) \prod_{(i,j,k,l)} \Gamma(R_i,R_j,R_k,R_l|R_{-(i,j,k,l)}=1,L) \times \dots \times \Gamma(R_1,...,R_K | L)
	\end{aligned}
	\end{equation*}
	where the $\Gamma(\cdot | \cdot, L)$ are 3-way, 4-way, ..., and $K$-way interaction terms. These interaction terms take the following forms: $\Gamma(R_i,R_j,R_k|R_{-(i,j,k)}=1,L) = \frac{\Odds(R_i,R_j|R_k, R_{-(i,j,k)}=1,L)}{\Odds(R_i,R_j|R_k=1, R_{-(i,j,k)}=1,L)}$, $\Gamma(R_i,R_j,R_k,R_l|R_{-(i,j,k,l)}=1,L) = \frac{\Odds(R_i,R_j|R_k, R_l, R_{-(i,j,k,l)}=1,L)}{\Odds(R_i,R_j|R_k=1, R_l, R_{-(i,j,k,l)}=1,L)}\frac{\Odds(R_i,R_j|R_k=1, R_l=1, R_{-(i,j,k,l)}=1,L)}{\Odds(R_i,R_j|R_k, R_l=1, R_{-(i,j,k,l)}=1,L)}$, and so on, up to $\Gamma(R_1,...,R_K | L) =\\ \frac{\Odds(R_i,R_j|R_{-(i,j)},L) \prod_{(k,l)}\Odds(R_i,R_j|R_{(k,l)}=1,R_{-(i,j,k,l)},L) \prod_{(k,l,m,n)}\Odds(R_i,R_j|R_{(k,l,m,n)}=1,R_{-(i,j,k,l,m,n)},L) \times \cdots }{\prod_{k} \Odds(R_i,R_j|R_k=1,R_{-(i,j,k)},L) \prod_{(k,l,m)}\Odds(R_i,R_j|R_{(k,l,m)}=1,R_{-(i,j,k,l,m)},L) \times \cdots }$.
	 
	 Each pairwise odds ratio is a function of only $L_{-(i,j)}$ and hence a function of observed data by the following symmetry argument. Consider the pairwise odds ratio
	\begin{equation*}\begin{aligned}
	\Odds(R_i,R_j|R_{-(i,j)}=1,L) &= \frac{p(R_i|R_{-(i,j)}=1,R_j,L)p(R_i=1|R_{-i}=1,L)}{p(R_i|R_{-i}=1,L)p(R_i=1|R_{-(i,j)}=1,R_j,L)} \\
	&= \frac{p(R_i|R_{-(i,j)}=1,R_j,L_{-i})p(R_i=1|R_{-i}=1,L_{-i})}{p(R_i|R_{-i}=1,L_{-i})p(R_i=1|R_{-(i,j)}=1,R_j,L_{-i})}
	\end{aligned}
	\end{equation*}
	by (\ref{eq:NSC}). By symmetry, 
	\begin{equation*}\begin{aligned}
	\Odds(R_i,R_j|R_{-(i,j)}=1,L) &= \frac{p(R_j|R_{-(i,j)}=1,R_i,L)p(R_j=1|R_{-j}=1,L)}{p(R_j|R_{-j}=1,L)p(R_j=1|R_{-(i,j)}=1,R_i,L)} \\
	&= \frac{p(R_j|R_{-(i,j)}=1,R_i,L_{-j})p(R_j=1|R_{-j}=1,L_{-j})}{p(R_j|R_{-j}=1,L_{-j})p(R_j=1|R_{-(i,j)}=1,R_i,L_{-j})}
	\end{aligned}
	\end{equation*}
	so $\Odds(R_i,R_j|R_{-(i,j)}=1,L) = \Odds(R_i,R_j|R_{-(i,j)}=1,L_{-(i,j)})$ a function of only $L_{-(i,j)}$. A similar symmetry argument establishes that each $\Gamma(R_i,...,R_t | R_{-(i,...,t)} = 1,L)$ is 
	a function of $L_{-(i,...,t)}$ 
	and thus also a function of observed data. Each $\Gamma(\cdot | \cdot, L)$ function is symmetric in its arguments to the left of the conditioning bar, and so has multiple equivalent representations.
	For example, consider a 3-way term, $\Gamma(R_1,...,R_3 | (R_{4},...,R_K) = 1,L) = \frac{\Odds(R_1,R_3|(R_{4},...,R_K) = 1, R_2,L)}{\Odds(R_1,R_3|(R_2, R_{4},...,R_K) = 1, L)}$ $= \frac{\Odds(R_1,R_2|(R_{4},...,R_K) = 1, R_3,L)}{\Odds(R_1,R_2|(R_3,...,R_K) = 1, L)} = \frac{\Odds(R_2,R_3|(R_{4},...,R_K) = 1, R_1,L)}{\Odds(R_2,R_3|(R_1, R_{4},...,R_K) = 1, L)}$. By (\ref{eq:NSC}), $\frac{\Odds(R_1,R_3|(R_{4},...,R_K) = 1, R_2,L)}{\Odds(R_1,R_3|(R_2, R_{4},...,R_K) = 1, L)}$ is a function of $L_{-(1,3)}$, $\frac{\Odds(R_1,R_2|(R_{4},...,R_K) = 1, R_3,L)}{\Odds(R_1,R_2|(R_3,...,R_K) = 1, L)}$ is a function of $L_{-(1,2)}$, and $\frac{\Odds(R_2,R_3|(R_{4},...,R_K) = 1, R_1,L)}{\Odds(R_2,R_3|(R_1, R_{4},...,R_K) = 1, L)}$ is a function of $L_{-(2,3)}$ which is only satisfied if $\Gamma(R_1,...,R_3 | (R_{4},...,R_K) = 1,L)$
	is a function of only $L_{-(1,2,3)}$, hence a function of observed data. The same is true for higher-order interaction functions which are all symmetric. The $K$-way interaction $\Gamma(R_1,...,R_K | L)$ may be equivalently expressed with any choice of indices $i,j$ in the odds ratios that comprise it and so by the same argument is not a function of $L$ at all.
\end{proof}

Next we describe the semiparametric theory necessary for Lemma \ref{lma:oddsknown} and Theorem \ref{thm:nonparametricIF} in detail. 

Let $\Lambda_1$ denote the observed nuisance tangent space for $\eta$ and $\Lambda_2$ the observed nuisance tangent space for $\gamma$. (All nuisance tangent spaces we discuss here are subspaces of the standard Hilbert space of random functions with mean zero and finite variance, equipped with the usual inner product.) We use the superscript $\perp$ to denote the orthocomplement of a space and the subscript $0$ to denote the subspace that is mean zero. In particular $\Lambda^{\perp}_{1,0} = (\Lambda_1^{\perp})_0$. We are interested in $\Lambda^{\perp}_0 = \Lambda^{\perp}_{1,0} \cap \Lambda^{\perp}_{2,0}$ since the influence functions for $\beta$ reside in this space, or more precisely the set of influence functions of all regular and asymptotically linear (RAL) estimators of $\beta$ is $\{ \E[\phi S^{'}_{\beta}]^{-1}\phi: \phi \in \Lambda^{\perp}_0\}$, where $S_{\beta}$ is the observed data score for $\beta$ evaluated at the truth. We use the following result from \citet[Prop.\ A1.3]{rotnitzky1997analysis}:
\begin{equation*}
\Lambda^{\perp}_{1,0} = \left \{ \frac{\mathbb{I}(R=1)}{\pi_J(L)} d(L) + a(R, L_{(R)}) : d(L) \in \Lambda^{F,\perp}_0, a(R, L_{(R)}) \in \Lambda^A \right \}
\end{equation*}
where $\Lambda^{F,\perp}_0$ is the orthocomplement to the nuisance tangent space in the full model, and $\Lambda^A = \{ a(R, L_{(R)}) : \E[a(R, L_{(R)})|L]=0 \}$. In our setting, since the influence functions for $\beta$ in the full model are denoted by $\phi_{\text{full}}(\beta)$ we have that $d(L) = \phi_{\text{full}}(\beta)$. 
We can also characterize $\Lambda^A$ by noting that it can be written equivalently as\\ $\Lambda^A = \left \{ \sum_{j = 1}^{J-1} \mathbb{I}(R=r_j) a(R, L_{(R)}) - \frac{\mathbb{I}(R=1)}{\pi_J(L)} \sum_{j = 1}^{J-1} \pi_j(L) a(R, L_{(R)}) \right \}$. To see that they are equivalent, first one may verify that an element of this set is mean zero conditional on $L$. Then, we only need to argue the converse, that every conditionally mean zero function $a(R, L_{(R)})$ is in this set. $\E[a(R, L_{(R)}) | L] = 0$ implies that $\pi_J(L) a(1,L_{(R)}) = -\sum_{j=1}^{J-1} \pi_j(L) a(R, L_{(R)})$. Since $a(R, L_{(R)}) = \sum_{j=1}^{J-1} \mathbb{I}(R=r_j) a(R, L_{(R)}) + \mathbb{I}(R=1) a(1, L_{(R)})$ by substitution the result follows. Therefore,
\begin{equation*}
\Lambda^{\perp}_{1,0} = \left \{ \frac{\mathbb{I}(R=1)}{\pi_J(L)}  \phi_{\text{full}}(\beta) + \sum_{j=1}^{J-1} \mathbb{I}(R=r_j) a(R, L_{(R)}) - \frac{\mathbb{I}(R=1)}{\pi_J(L)} \sum_{j=1}^{J-1} \pi_j(L) a(R, L_{(R)}) \right \}
\end{equation*}
for any function $a(R, L_{(R)})$ of the observed data. \citet[p.~1117]{scharfstein1999adjusting} establish:
\begin{equation*}
\Lambda^{\perp}_{2,0} = \left \{ d(R,L_{(R)}) : d(R,L_{(R)}) \in \Lambda^{F, \perp}_{2,0} \right \}
\end{equation*}
where $d$ is some function and $\Lambda^{F}_{2}$  would be the nuisance tangent space for $\gamma$ in the full data model. As noted in the main text, we use a multinomial logistic parameterization for the conditional model, $\log \frac{\pi_j(L)}{\pi_J(L)} = h_j(L; \gamma)$. The likelihood is then:
\begin{equation*}
\begin{aligned}
\mathcal{L}_{R|L} &= \prod_{j=1}^{J} \pi_j(L)^{\mathbb{I}(R=r_j)}\\
\end{aligned}
\end{equation*}
where $\pi_j(L) = \frac{e^{h_j(L)}}{1 + \sum_{j=1}^{J-1} e^{h_j(L)}}$ for $j = 1,...,J-1$ and $\pi_J(L) = \frac{1}{1 + \sum_{j=1}^{J-1} e^{h_j(L)}}$.
The set of scores for the conditional model can be obtained by taking the logarithm and derivative w.r.t.\ $\gamma$. Consequently the nuisance tangent space is:
\begin{equation*}
\Lambda^{F}_{2,0} = \left \{ \sum_{j=1}^{J-1} \Big( \mathbb{I}(R=r_j) - \pi_j(L) \Big) g_j(L) \right \}
\end{equation*}
for all functions $g_j$. 
For Lemma \ref{lma:oddsknown} we assume that the log odds ratio is a known function of $L$, i.e., that $h_j(L; \gamma) = h_{1,j}(L=0;\gamma) + h_{2,j}(L)$ where $h_{2,j}$ is known. Then, taking derivatives w.r.t.\ $\gamma$ gives a simpler nuisance tangent space for the full model:
\begin{equation*}
\Lambda^{F}_{2,0} = \left \{ \sum_{j=1}^{J-1} \Big( \mathbb{I}(R=r_j) - \pi_j(L) \Big) c_j \right \}
\end{equation*}
for all constants $c_j$. Following \citet[p.~S14]{sun2016semiparametric}, we derive the intersection $\Lambda^{\perp}_0$ by noting 
\begin{equation*}
\Lambda^{\perp}_0 = \left \{ \lambda^{\perp}_{1,0} \in  \Lambda^{\perp}_{1,0} : \E[\lambda^F_{2,0} \lambda^{\perp}_{1,0}] =0, \lambda^F_{2,0} \in \Lambda^F_{2,0} \right \}.
\end{equation*} 

\begin{proof}[Proof of Lemma~\ref{lma:oddsknown}]
	We show that the nuisance tangent space $\Lambda^{\perp}_0$ consists of the single element
	\begin{equation*}
	\Lambda^{\perp}_{0} = \left \{ \frac{\mathbb{I}(R=1)}{\pi_J(L)}  \phi_{\text{full}}(\beta) + \sum_{j=1}^{J-1} \mathbb{I}(R=r_j) a^* - \frac{\mathbb{I}(R=1)}{\pi_J(L)} \sum_{j = 1}^{J-1} \pi_j(L) a^* : a^* = \E[ \phi_{\text{full}}(\beta) |R, L_{(R)}] \right \}
	\end{equation*}
	It suffices to show that this choice of $a^*$ is the unique solution to $\E[\lambda^F_{2,0} \lambda^{\perp *}_{1,0}] = 0$ for all $\lambda^{F}_{2,0} \in \Lambda^{F}_{2,0}$.
	
	\begin{equation*}\begin{split}
	\E \left \{ \left \{ \sum_{j=1}^{J-1} \Big( \mathbb{I}(R=r_j) - \pi_j(L) \Big) c_j \right \} \right. \\
	\left. \times \left \{ \frac{\mathbb{I}(R=1)}{\pi_J(L)}  \phi_{\text{full}}(\beta) + \sum_{j = 1}^{J-1} \mathbb{I}(R=r_j) a^* - \frac{\mathbb{I}(R=1)}{\pi_J(L)} \sum_{j = 1}^{J-1} \pi_j(L) a^* \right \} \right \} = 0 \hspace{2mm} \forall c_j \\
	\iff \E \left \{ \left \{ \sum_{j=1}^{J-1} \Big( \mathbb{I}(R=r_j) - \pi_j(L) \Big) \right \} \right. \\
	\left. \times \left \{ \frac{\mathbb{I}(R=1)}{\pi_J(L)}  \phi_{\text{full}}(\beta) + \sum_{j = 1}^{J-1} \mathbb{I}(R=r_j) a^* - \frac{\mathbb{I}(R=1)}{\pi_J(L)} \sum_{j = 1}^{J-1} \pi_j(L) a^* \right \} \right \} = 0 \\
	\iff \E \left \{ \frac{\mathbb{I}(R=1) \sum_{j=1}^{J-1} \Big( \mathbb{I}(R=r_j) - \pi_j(L) \Big) }{\pi_J(L)}  \phi_{\text{full}}(\beta) \right \}  \\ 
	+ \E \left \{ \sum_{j=1}^{J-1} \Big( \mathbb{I}(R=r_j) - \pi_j(L) \Big) \sum_{j = 1}^{J-1} \mathbb{I}(R=r_j) a^*  \right \}  \\
	- \E \left \{ \frac{\mathbb{I}(R=1)}{\pi_J(L)} \sum_{j=1}^{J-1} \Big( \mathbb{I}(R=r_j) - \pi_j(L) \Big)  \sum_{j = 1}^{J-1} \pi_j(L) a^* \right \} = 0 \\
	\iff -\E \left \{ \sum_{j = 1}^{J-1} \pi_j(L)  \phi_{\text{full}}(\beta) \right \}
	+ \E \left \{ (1 - \sum_{j = 1}^{J-1} \pi_j(L)) \sum_{j = 1}^{J-1} \pi_j(L) a^* \right \} \\
	+ \E \left \{ \sum_{j = 1}^{J-1} \pi_j(L)  \sum_{j = 1}^{J-1} \pi_j(L) a^* \right \} = 0 \\
	\iff -\E \left \{ \sum_{j = 1}^{J-1} \pi_j(L)  \phi_{\text{full}}(\beta) \right \} + \E \left \{ \sum_{j = 1}^{J-1} \pi_j(L) a^* \right \} = 0
	\end{split}
	\end{equation*}
	\begin{align*} \begin{split}
	&\iff \E \left \{ \sum_{j = 1}^{J-1} \pi_j(L) \Big( a^* - \E[  \phi_{\text{full}}(\beta) | R, L_{(R)} ] \Big)  \right \} = 0 \\
	&\iff a^* = \E[ \phi_{\text{full}}(\beta) |R, L_{(R)}]
	\end{split}
	\end{align*}
	The third $\iff$ follows because $\E \left \{ \frac{\mathbb{I}(R=1) \sum_{j=1}^{J-1} ( \mathbb{I}(R=r_j) - \pi_j(L) ) }{\pi_J(L)} |L \right \} = \E \left \{ \frac{-\mathbb{I}(R=1) \sum_{j=1}^{J-1} \pi_j(L)}{\pi_J(L)} |L \right \}$ $= - \sum_{j=1}^{J-1} \pi_j(L)$ and $\E \left \{ \sum_{j=1}^{J-1} \Big( \mathbb{I}(R=r_j) - \pi_j(L) \Big) \sum_{j = 1}^{J-1} \mathbb{I}(R=r_j) | L  \right \}$\\ $= \sum_{j=1}^{J-1} \pi_j(L) - \sum_{j=1}^{J-1} \pi_j(L) \sum_{j=1}^{J-1} \pi_j(L) = (1 - \sum_{j=1}^{J-1} \pi_j(L)) \sum_{j=1}^{J-1} \pi_j(L)$.
\end{proof}

This gives the set of influence functions when $h_{2,j}$ is known. Next, we loosen this restriction and derive the nonparametric IF when the $\Odds$ is not a known function of $L$. We need to ``adjust'' the above estimating function by a term which accounts for nonparametric estimation of the $\Odds$. We follow the approach of \citet{sun2016semiparametric} who consider a parametric model for the missingness mechanism and an instrumental variable assumption. (Note, however, that this strategy is a general one, which can be applied to any MNAR problem so long as the odds ratio is identified.) 

Consider the estimating function $\phi_{\text{odds}}(\beta; \Odds) \in \Lambda_0^{\perp}$ we derived, which is a function of $\beta$ and the (known) the odds ratio. $\E[\phi_{\text{odds}}(\beta; \Odds)]=0$ at the truth. We can consider parametric submodels by indexing the densities that appear in the odds ratio with $t$ such that $p_t(\cdot)|_{t=0} = p(\cdot)$, i.e., the true densities are recovered at $t=0$. $\E_t[\phi_{\text{odds}}(\beta_t; \Odds_t)]=0$ for all $t$. Then:
\begin{equation}\begin{aligned}
\nabla_t \Et[\phi_{\text{odds}}(\beta_t; \Oddst)]=0\\
\E[\phi_{\text{odds}}(\beta; \Odds) S(R,L_{(R)})] + \E[\nabla_t \phi_{\text{odds}}(\beta_t; \Oddst)] =0\\
\E[\phi_{\text{odds}}(\beta; \Odds) S(R,L_{(R)})] + \E \left[ \frac{\partial}{\partial \beta_t} \phi_{\text{odds}}(\beta_t; \Odds) \frac{\partial \beta_t}{\partial t} \right] +  
\E[ \nabla_t \phi_{\text{odds}}(\beta; \Oddst)]=0
\end{aligned}
\end{equation}
Solving for $\frac{\partial \beta_t}{\partial t}$:
\begin{equation}
\frac{\partial \beta_t}{\partial t} = - \E \left[ \frac{\partial}{\partial \beta_t} \phi_{\text{odds}}(\beta_t; \Odds) \right]^{-1} \times \left( \E[\phi_{\text{odds}}(\beta; \Odds) S(R,L_{(R)})] + \E[ \nabla_t \phi_{\text{odds}}(\beta; \Oddst)] \right)
\end{equation}
If we can write the second term in this sum as the expectation of some function times a score, i.e., $\E[\phi_{\text{adj}}(\beta; \Odds)S(R,L_{(R)})]$, then our desired IF is $ - \E \left[ \frac{\partial}{\partial \beta} \phi_{\text{odds}}(\beta; \Odds) \right]^{-1} \times \left( \phi_{\text{odds}}(\beta; \Odds) + \phi_{\text{adj}}(\beta; \Odds) \right) $. That is, we ``adjust'' the previously found estimating function from Lemma \ref{lma:oddsknown} and normalize by $ - \E \left[ \frac{\partial}{\partial \beta} \phi_{\text{odds}}(\beta; \Odds) \right]^{-1}$.

\begin{proof}[Proof of Theorem~\ref{thm:nonparametricIF}]
	\begin{align*}
	\E[ \nabla_t \phi(\beta; \Oddst)] &= \E \left[ \nabla_t \frac{\mathbb{I}(R=1)}{\pi_J(L;t)}  \phi_{\text{full}}(\beta) \right] \\
	&+ \E \left[ \nabla_t \sum_{j=1}^{J-1} \mathbb{I}(R=r_j) \Et[ \phi_{\text{full}}(\beta) |R, L_{(R)}] \right] \\
	&- \E \left[ \nabla_t \frac{\mathbb{I}(R=1)}{\pi_J(L;t)} \sum_{j = 1}^{J-1} \pi_j(L;t) \Et[ \phi_{\text{full}}(\beta) |R, L_{(R)}] \right]\\
	&= \E \left[ \nabla_t \frac{\mathbb{I}(R=1)}{\pi_J(L;t)}  \phi_{\text{full}}(\beta) \right] \\
	&+ \E \left[ \sum_{j=1}^{J-1} \mathbb{I}(R=r_j) \nabla_t \Et[ \phi_{\text{full}}(\beta) |R, L_{(R)}] \right] \\
	&- \E \left[ \mathbb{I}(R=1) \sum_{j = 1}^{J-1} \nabla_t \frac{\pi_j(L;t)}{\pi_J(L;t)} \E[ \phi_{\text{full}}(\beta) |R, L_{(R)}] \right]\\
	&- \E \left[ \mathbb{I}(R=1) \sum_{j = 1}^{J-1}  \frac{\pi_j(L;t)}{\pi_J(L;t)} \nabla_t \Et[ \phi_{\text{full}}(\beta) |R, L_{(R)}] \right]
	\end{align*}
	\begin{align*}
	&= \E \left[ \nabla_t \frac{\mathbb{I}(R=1)}{\pi_J(L;t)}  \phi_{\text{full}}(\beta) \right] \\
	&- \E \left[ \mathbb{I}(R=1) \sum_{j = 1}^{J-1} \nabla_t \frac{\pi_j(L;t)}{\pi_J(L;t)} \E[ \phi_{\text{full}}(\beta) |R, L_{(R)}] \right]\\
	&= \E \left[ \nabla_t \frac{\mathbb{I}(R=1) \sum_{j = 1}^{J-1} \pi_j(L;t)}{\pi_J(L;t)}  \phi_{\text{full}}(\beta) \right] \\
	&- \E \left[ \mathbb{I}(R=1) \sum_{j = 1}^{J-1} \nabla_t \frac{\pi_j(L;t)}{\pi_J(L;t)} \E[ \phi_{\text{full}}(\beta) |R, L_{(R)}] \right]\\
	&= \E \left[ \mathbb{I}(R=1) \sum_{j = 1}^{J-1} \nabla_t \frac{\pi_j(L;t)}{\pi_J(L;t)} \Delta(R,L) \right]
	\end{align*} 
	where $\Delta(R,L) \equiv  \phi_{\text{full}}(\beta) - \E[ \phi_{\text{full}}(\beta) |R, L_{(R)}]$. Then, using the definition of $\Odds(r,L)$:
	\begin{align*}
	\E[ \nabla_t \phi(\beta; \Oddst)] &= \E \left[ \mathbb{I}(R=1) \sum_{j = 1}^{J-1} \nabla_t \Oddst(r,L) \frac{\pi_j(L=0)}{\pi_J(L=0)} \Delta(R,L) \right]\\
	&= \E \left[ \mathbb{I}(R=1) \sum_{j = 1}^{J-1} \frac{ \nabla_t \Oddst(r,L)}{\Odds(r,L)} \frac{\pi_j(L)}{\pi_J(L)} \Delta(R,L) \right]\\
	&= \E \left[ \sum_{j = 1}^{J-1} \nabla_t \log \Oddst(r,L) \pi_j(L)\Delta(R,L) \right]\\
	&= \E \left[ \sum_{j = 1}^{J-1} \sum_{i=1}^K (1-r_i) \nabla_t \delta h_i(L_{-i};t) \pi_j(L)\Delta(R,L) \right]\\
	&= \E \left[ \sum_{i=1}^K (1-R_i) \nabla_t \delta h_i(L_{-i};t) \Delta(R,L) \right]
	\end{align*}
	\begin{align*}
	&= \E \left[ \sum_{i=1}^K \E[(1-R_i)|L_{-i}] \nabla_t \delta h_i(L_{-i};t) \E[\Delta(R,L)|R_i=0,L_{-i}] \right]\\
	&= -\E \left[ \sum_{i=1}^K \E[(1-R_i)|L_{-i}] \frac{ p(R_i=1|R_{-i},L_{-i})}{ p(R_i=0|R_{-i},L_{-i})} \frac{\nabla_t p_t(R_i=1|R_{-i}=1,L)}{p(R_i=1|R_{-i}=1,L)^2} \right. \\
	& \qquad \qquad \left. \times \E[\Delta(R,L)|R_i=0,L_{-i}] \Big] \right. (*) \\
	&= -\E \left[ \sum_{i=1}^K \E[(1-R_i)|L_{-i}] \frac{\mathbb{I}(R_{-i}=1)}{p(R_{-i}=1|L_{-i})} \left( \frac{R_i}{p(R_i=1|R_{-i},L_{-i})}-1 \right) \right. \\
	& \qquad \qquad \left. \times \vphantom{\sum_{i=1}^K} \frac{ p(R_i=1|R_{-i},L_{-i})}{ p(R_i=0|R_{-i},L_{-i})} \frac{\nabla_t p_t(R_i|R_{-i},L)}{p(R_i|R_{-i},L)} \E[\Delta(R,L)|R_i=0,L_{-i}] \right] \\
	&= -\E \left[ \sum_{i=1}^K \E[(1-R_i)|L_{-i}] \frac{\mathbb{I}(R_{-i}=1)}{p(R_{-i}=1|L_{-i})} \left( \frac{R_i}{p(R_i=1|R_{-i},L_{-i})}-1 \right)  \right. \\
	& \qquad \qquad \left. \times \vphantom{\sum_{i=1}^K} \frac{ p(R_i=1|R_{-i},L_{-i})}{ p(R_i=0|R_{-i},L_{-i})} \E[\Delta(R,L)|R_i=0,L_{-i}] S(R_i|R_{-i},L) \right] \\
	&= -\E \left[ \sum_{i=1}^K \E[(1-R_i)|L_{-i}] \frac{\mathbb{I}(R_{-i}=1)}{p(R_{-i}=1|L_{-i})} \left( \frac{R_i}{p(R_i=1|R_{-i},L_{-i})}-1 \right) \right. \\
	& \qquad \qquad \left. \times \vphantom{\sum_{i=1}^K} \frac{ p(R_i=1|R_{-i},L_{-i})}{ p(R_i=0|R_{-i},L_{-i})} \E[\Delta(R,L)|R_i=0,L_{-i}] S(R,L) \right] \\
	&= -\E \left[ \sum_{i=1}^K \E[(1-R_i)|L_{-i}] \frac{\mathbb{I}(R_{-i}=1)}{p(R_{-i}=1|L_{-i})} \left( \frac{R_i}{p(R_i=1|R_{-i},L_{-i})}-1 \right) \right. \\
	& \qquad \qquad \left. \times \vphantom{\sum_{i=1}^K}  \frac{  p(R_i=1|R_{-i},L_{-i})}{ p(R_i=0|R_{-i},L_{-i})} \E[\Delta(R,L)|R_i=0,L_{-i}] \E[S(R,L)|R,L_{(R)}] \right] \\
	\end{align*}
	where at (*) we used $\nabla_t \left( \log(\frac{p_t(R_i=0|R_{-i}=1,L_{-i})}{p_t(R_i=1|R_{-i}=1,L_{-i})}) - \log(\frac{p(R_i=0|R_{-i}=1,L_{-i}=0)}{p(R_i=1|R_{-i}=1,L_{-i}=0)}) \right) =\\
	\frac{\nabla_t \frac{p_t(R_i=0|R_{-i}=1,L_{-i})}{p_t(R_i=1|R_{-i}=1,L_{-i})} }{\frac{p_t(R_i=0|R_{-i}=1,L_{-i})}{p_t(R_i=1|R_{-i}=1,L_{-i})}} = \frac{p_t(R_i=1|R_{-i}=1,L_{-i})}{p_t(R_i=0|R_{-i}=1,L_{-i})} \left( \frac{- \nabla_t p_t(R_i=1|R_{-i}=1,L_{-i})}{p(R_i=1|R_{-i}=1,L_{-i})} - \frac{ p(R_i=0|R_{-i}=1,L_{-i}) \nabla_t p_t(R_i=1|R_{-i}=1,L_{-i})}{p(R_i=1|R_{-i}=1,L_{-i})^2} \right)\\ =
	- \frac{p_t(R_i=1|R_{-i}=1,L_{-i})}{p_t(R_i=0|R_{-i}=1,L_{-i})} \nabla_t p_t(R_i=1|R_{-i}=1,L_{-i}) \left( \frac{p(R_i=1|R_{-i}=1,L_{-i})}{p(R_i=1|R_{-i}=1,L_{-i})^2} + \frac{p(R_i=0|R_{-i}=1,L_{-i})}{p(R_i=1|R_{-i}=1,L_{-i})^2} \right)\\ =
	- \frac{p_t(R_i=1|R_{-i}=1,L_{-i})}{p_t(R_i=0|R_{-i}=1,L_{-i})}  \left( \frac{\nabla_t p_t(R_i=1|R_{-i}=1,L_{-i})}{p(R_i=1|R_{-i}=1,L_{-i})^2} \right).$
\end{proof}

\begin{proof}[Proof of Theorem~\ref{thm:DR}]
	First, consider the model $\mathcal{M}^{\pi} = \cap_j \mathcal{M}_j^{\pi}$ where $\mathcal{M}_j^{\pi} = \mathcal{M}_{\pi,j} \cap \mathcal{M}_{\Odds}$. Then:
	\begin{align*}
	&\E \left \{ \frac{\mathbb{I}(R=1)}{\pi_J(L,X)}  \phi_{\text{full}}(\beta) + \sum_{j=1}^{J-1} \mathbb{I}(R=r_j) \E[ \phi_{\text{full}}(\beta)|R, L_{(R)}, X] \right.\\
	& \hspace{5mm} \left. - \frac{\mathbb{I}(R=1)}{\pi_J(L,X)} \sum_{j = 1}^{J-1} \pi_j(L,X) \E[\phi_{\text{full}}(\beta)|R, L_{(R)}, X] \right \} = \\
	&\E \left \{ \phi_{\text{full}}(\beta) \right \} + \E \left \{ \sum_{j=1}^{J-1} \mathbb{I}(R=r_j) \E[\phi_{\text{full}}(\beta)|R, L_{(R)}, X] \right \}\\
	& \hspace{5mm} - \E \left \{  \sum_{j = 1}^{J-1} \pi_j(L,X) \E[\phi_{\text{full}}(\beta)|R, L_{(R)}, X] \right \} = \\
	&\E \left \{ \phi_{\text{full}}(\beta) \right \} = 0.
	\end{align*}
	Next, consider the model $\mathcal{M}^{\text{PM}} = \cap_j \mathcal{M}_j^{\text{PM}}$ where $\mathcal{M}_j^{\text{PM}} = \mathcal{M}_{\text{PM},j} \cap \mathcal{M}_{\Odds}$. Notice that 
	\begin{align*}
	\frac{1}{\pi_J(L,X)} = 1 + \sum_{j=1}^{J-1} \frac{\pi_j(L,X)}{\pi_J(L,X)} = 1 + \sum_{j=1}^{J-1} \Odds(r_j,L)\frac{\pi_j(L=0,X)}{\pi_J(L=0,X)}.
	\end{align*}
	Then:
	\begin{align*}
	&\E \left \{ \frac{\mathbb{I}(R=1)}{\pi_J(L,X)} \phi_{\text{full}}(\beta) + \sum_{j=1}^{J-1} \mathbb{I}(R=r_j) \E[\phi_{\text{full}}(\beta)|R, L_{(R)}, X] \right. \\
	&\left. \hspace{5mm} - \frac{\mathbb{I}(R=1)}{\pi_J(L,X)} \sum_{j = 1}^{J-1} \pi_j(L,X) \E[\phi_{\text{full}}(\beta)|R, L_{(R)}, X] \right \} = \\
	&\E \left \{ \mathbb{I}(R=1) \left( 1 + \sum_{j=1}^{J-1} \Odds(r_j,L)\frac{\pi_j(L=0,X)}{\pi_J(L=0,X)} \right) \phi_{\text{full}}(\beta) \right \} \\
	& \hspace{5mm} + \E \left \{ \sum_{j=1}^{J-1} \mathbb{I}(R=r_j) \E[\phi_{\text{full}}(\beta)|R, L_{(R)}, X] \right \} \\
	& \hspace{5mm} - \E \left \{ \frac{\mathbb{I}(R=1)}{\pi_J(L,X)} \sum_{j = 1}^{J-1} \pi_j(L,X) \E[\phi_{\text{full}}(\beta)|R, L_{(R)}, X] \right \} = \\
	&\E \left \{ \mathbb{I}(R=1) \phi_{\text{full}}(\beta) \right \} 
	+ \E \left \{ \sum_{j=1}^{J-1} \mathbb{I}(R=r_j) \E[\phi_{\text{full}}(\beta)|R, L_{(R)}, X] \right \} \\
	& \hspace{5mm} + \E \left \{ \mathbb{I}(R=1) \sum_{j = 1}^{J-1} \Odds(r_j,L)\frac{\pi_j(L=0,X)}{\pi_J(L=0,X)} \left( \phi_{\text{full}}(\beta) - \E[\phi_{\text{full}}(\beta)|R, L_{(R)}, X] \right) \right \} = \\
	&\E \left \{ \mathbb{I}(R=1) \phi_{\text{full}}(\beta) \right \} 
	+ \E \left \{ \sum_{j=1}^{J-1} \mathbb{I}(R=r_j) \E[\phi_{\text{full}}(\beta)|R, L_{(R)}, X] \right \} \\
	& \hspace{5mm} + \E \left \{ \mathbb{I}(R=1) \sum_{j = 1}^{J-1} \Odds(r_j,L)\frac{\pi_j(L=0,X)}{\pi_J(L=0,X)} \left( \phi_{\text{full}}(\beta) - \frac{\E[\Odds(r_j,L)\phi_{\text{full}}(\beta)|R=1, L_{(r_j)}, X]}{\E[\Odds(r_j,L)|R=1, L_{(r_j)}, X]} \right) \right \} = \\
	&\E \left \{ \mathbb{I}(R=1) \phi_{\text{full}}(\beta) \right \} 
	+ \E \left \{ \sum_{j=1}^{J-1} \mathbb{I}(R=r_j) \E[\phi_{\text{full}}(\beta)|R, L_{(R)}, X] \right \} = \\
	&\E \left \{ \sum_{j=1}^{J} \mathbb{I}(R=r_j) \E[\phi_{\text{full}}(\beta)|R, L_{(R)}, X] \right \} =
    \E \left\{ \phi_{\text{full}}(\beta) \right\} = 0
	\end{align*}
	The result for the union model follows.
\end{proof}

\begin{proof}[Proof of Theorem~\ref{thm:estimator}]
	By the previous theorem we have a system of consistent estimating equations for $\Omega = (\beta, \psi_X, \mu, \Odds)'$ and so according to standard semiparametric theory we have (under the usual regularity conditions) $\sqrt{n}(\widehat{\Omega} - \Omega_0) = -\sqrt{n} \E[\frac{\partial V(\Omega_0)}{\partial \Omega_0}]^{-1} \mathbb{P}_n (V(\Omega_0)) + o_p(1)$ \citep{van2000asymptotic}. The result follows.
\end{proof}

\newpage

\bibliographystyle{plainnat}
\bibliography{../references-personal/bib}

\begin{thebibliography}{38}
\providecommand{\natexlab}[1]{#1}
\providecommand{\url}[1]{\texttt{#1}}
\expandafter\ifx\csname urlstyle\endcsname\relax
  \providecommand{\doi}[1]{doi: #1}\else
  \providecommand{\doi}{doi: \begingroup \urlstyle{rm}\Url}\fi

\bibitem[Bhattacharya et~al.(2019)Bhattacharya, Nabi, Shpitser, and
  Robins]{bhattacharya2019identification}
Rohit Bhattacharya, Razieh Nabi, Ilya Shpitser, and James~M Robins.
\newblock Identification in missing data models represented by directed acyclic
  graphs.
\newblock In \emph{Proceedings of the 35th Conference on Uncertainty in
  Artificial Intelligence}, 2019.

\bibitem[Bickel et~al.(1993)Bickel, Klaassen, Ritov, and
  Wellner]{bickel1993efficient}
Peter~J Bickel, Chris~AJ Klaassen, Ya'acov Ritov, and Jon~A Wellner.
\newblock \emph{Efficient and adaptive estimation for semiparametric models}.
\newblock Johns Hopkins University Press, 1993.

\bibitem[Buuren and Groothuis-Oudshoorn(2010)]{buuren2010mice}
Stef~van Buuren and Karin Groothuis-Oudshoorn.
\newblock \texttt{mice}: Multivariate imputation by chained equations in {R}.
\newblock \emph{Journal of Statistical Software}, pages 1--68, 2010.

\bibitem[Chen(2007)]{chen2007semiparametric}
Hua~Yun Chen.
\newblock A semiparametric odds ratio model for measuring association.
\newblock \emph{Biometrics}, 63\penalty0 (2):\penalty0 413--421, 2007.

\bibitem[Chen(2010)]{chen2010compatibility}
Hua~Yun Chen.
\newblock Compatibility of conditionally specified models.
\newblock \emph{Statistics \& Probability Letters}, 80\penalty0 (7-8):\penalty0
  670--677, 2010.

\bibitem[Chen et~al.(2012)Chen, Ribaudo, Souda, Parekh, Ogwu, Lockman, Powis,
  Dryden-Peterson, Creek, Jimbo, Madidimalo, Makhema, Essex, and
  Shapiro]{chen2012highly}
Jennifer~Y Chen, Heather~J Ribaudo, Sajini Souda, Natasha Parekh, Anthony Ogwu,
  Shahin Lockman, Kathleen Powis, Scott Dryden-Peterson, Tracy Creek, William
  Jimbo, Tebogo Madidimalo, Joseph Makhema, Max Essex, and Roger~L Shapiro.
\newblock Highly active antiretroviral therapy and adverse birth outcomes among
  {HIV-infected} women in botswana.
\newblock \emph{The Journal of Infectious Diseases}, 206\penalty0
  (11):\penalty0 1695--1705, 2012.

\bibitem[H{\o}jsgaard et~al.(2012)H{\o}jsgaard, Edwards, and
  Lauritzen]{hojsgaard2012graphical}
S{\o}ren H{\o}jsgaard, David Edwards, and Steffen Lauritzen.
\newblock \emph{Graphical models with R}.
\newblock Springer Science \& Business Media, 2012.

\bibitem[Lauritzen(1996)]{lauritzen1996graphical}
Steffen~L Lauritzen.
\newblock \emph{Graphical models}.
\newblock Clarendon Press, 1996.

\bibitem[Li et~al.(2013)Li, Shen, Li, and Robins]{li2013weighting}
Lingling Li, Changyu Shen, Xiaochun Li, and James~M Robins.
\newblock On weighting approaches for missing data.
\newblock \emph{Statistical Methods in Medical Research}, 22\penalty0
  (1):\penalty0 14--30, 2013.

\bibitem[Little(1993)]{little1993pattern}
Roderick~JA Little.
\newblock Pattern-mixture models for multivariate incomplete data.
\newblock \emph{Journal of the American Statistical Association}, 88\penalty0
  (421):\penalty0 125--134, 1993.

\bibitem[Little and Rubin(2014)]{little2014statistical}
Roderick~JA Little and Donald~B Rubin.
\newblock \emph{Statistical analysis with missing data}.
\newblock John Wiley \& Sons, 2nd edition, 2014.

\bibitem[Marra et~al.(2017)Marra, Radice, B{\"a}rnighausen, Wood, and
  McGovern]{marra2017simultaneous}
Giampiero Marra, Rosalba Radice, Till B{\"a}rnighausen, Simon~N Wood, and
  Mark~E McGovern.
\newblock A simultaneous equation approach to estimating {HIV} prevalence with
  nonignorable missing responses.
\newblock \emph{Journal of the American Statistical Association}, 112\penalty0
  (518):\penalty0 484--496, 2017.

\bibitem[Mohan and Pearl(2018)]{mohan2018graphical}
Karthika Mohan and Judea Pearl.
\newblock Graphical models for processing missing data.
\newblock \emph{arXiv preprint arXiv:1801.03583}, 2018.

\bibitem[Mohan et~al.(2013)Mohan, Pearl, and Tian]{mohan2013graphical}
Karthika Mohan, Judea Pearl, and Jin Tian.
\newblock Graphical models for inference with missing data.
\newblock In \emph{Advances in Neural Information Processing Systems}, pages
  1277--1285, 2013.

\bibitem[Mohan et~al.(2018)Mohan, Thoemmes, and Pearl]{mohan2018estimation}
Karthika Mohan, Felix Thoemmes, and Judea Pearl.
\newblock Estimation with incomplete data: the linear case.
\newblock In \emph{Proceedings of the 27th International Joint Conference on
  Artificial Intelligence}, pages 5082--5088, 2018.

\bibitem[Newey(1990)]{newey1990semiparametric}
Whitney~K Newey.
\newblock Semiparametric efficiency bounds.
\newblock \emph{Journal of Applied Econometrics}, 5\penalty0 (2):\penalty0
  99--135, 1990.

\bibitem[Robins(1997)]{robins1997nonignorable}
James~M Robins.
\newblock Non-response models for the analysis of non-monotone non-ignorable
  missing data.
\newblock \emph{Statistics in Medicine}, 16\penalty0 (1):\penalty0 21--37,
  1997.

\bibitem[Robins and Gill(1997)]{robinsgill1997ignorable}
James~M Robins and Richard~D Gill.
\newblock Non-response models for the analysis of non-monotone ignorable
  missing data.
\newblock \emph{Statistics in Medicine}, 16\penalty0 (1):\penalty0 39--56,
  1997.

\bibitem[Robins et~al.(1994)Robins, Rotnitzky, and Zhao]{robins1994estimation}
James~M Robins, Andrea Rotnitzky, and Lue~Ping Zhao.
\newblock Estimation of regression coefficients when some regressors are not
  always observed.
\newblock \emph{Journal of the American Statistical Association}, 89\penalty0
  (427):\penalty0 846--866, 1994.

\bibitem[Robins et~al.(2000)Robins, Rotnitzky, and
  Scharfstein]{robins2000sensitivity}
James~M Robins, Andrea Rotnitzky, and Daniel~O Scharfstein.
\newblock Sensitivity analysis for selection bias and unmeasured confounding in
  missing data and causal inference models.
\newblock In M.~E. Halloran and D.~Berry, editors, \emph{Statistical Models in
  Epidemiology: The Environment and Clinical Trials}, pages 1--94. Springer,
  2000.

\bibitem[Rotnitzky and Robins(1997)]{rotnitzky1997analysis}
Andrea Rotnitzky and James Robins.
\newblock Analysis of semi-parametric regression models with non-ignorable
  non-response.
\newblock \emph{Statistics in Medicine}, 16\penalty0 (1):\penalty0 81--102,
  1997.

\bibitem[Rotnitzky et~al.(1998)Rotnitzky, Robins, and
  Scharfstein]{rotnitzky1998semiparametric}
Andrea Rotnitzky, James~M Robins, and Daniel~O Scharfstein.
\newblock Semiparametric regression for repeated outcomes with nonignorable
  nonresponse.
\newblock \emph{Journal of the American Statistical Association}, 93\penalty0
  (444):\penalty0 1321--1339, 1998.

\bibitem[Sadinle and Reiter(2017)]{sadinle2017itemwise}
Mauricio Sadinle and Jerome~P Reiter.
\newblock Itemwise conditionally independent nonresponse modelling for
  incomplete multivariate data.
\newblock \emph{Biometrika}, 104\penalty0 (1):\penalty0 207--220, 2017.

\bibitem[Sadinle and Reiter(2018)]{sadinle2017sequential}
Mauricio Sadinle and Jerome~P Reiter.
\newblock Sequential identification of nonignorable missing data mechanisms.
\newblock \emph{Statistica Sinica}, 28:\penalty0 1741--1759, 2018.

\bibitem[Scharfstein et~al.(1999)Scharfstein, Rotnitzky, and
  Robins]{scharfstein1999adjusting}
Daniel~O Scharfstein, Andrea Rotnitzky, and James~M Robins.
\newblock Adjusting for nonignorable drop-out using semiparametric nonresponse
  models.
\newblock \emph{Journal of the American Statistical Association}, 94\penalty0
  (448):\penalty0 1096--1120, 1999.

\bibitem[Shpitser(2016)]{shpitser2016consistent}
Ilya Shpitser.
\newblock Consistent estimation of functions of data missing non-monotonically
  and not at random.
\newblock In \emph{Advances in Neural Information Processing Systems}, pages
  3144--3152, 2016.

\bibitem[Shpitser et~al.(2015)Shpitser, Mohan, and Pearl]{shpitser2015missing}
Ilya Shpitser, Karthika Mohan, and Judea Pearl.
\newblock Missing data as a causal and probabilistic problem.
\newblock In \emph{Proceedings of the 31st Conference on Uncertainty in
  Artificial Intelligence}, 2015.

\bibitem[Sun and Tchetgen~Tchetgen(2018)]{sun2018inverse}
BaoLuo Sun and Eric~J Tchetgen~Tchetgen.
\newblock On inverse probability weighting for nonmonotone missing at random
  data.
\newblock \emph{Journal of the American Statistical Association}, 113\penalty0
  (521):\penalty0 369--379, 2018.

\bibitem[Sun et~al.(2018)Sun, Liu, Miao, Wirth, Robins, and
  Tchetgen]{sun2016semiparametric}
BaoLuo Sun, Lan Liu, Wang Miao, Kathleen Wirth, James Robins, and Eric
  J~Tchetgen Tchetgen.
\newblock Semiparametric estimation with data missing not at random using an
  instrumental variable.
\newblock \emph{Statistica Sinica}, 28:\penalty0 1965--1983, 2018.

\bibitem[Tan(2019)]{tan2019doubly}
Zhiqiang Tan.
\newblock On doubly robust estimation for logistic partially linear models.
\newblock \emph{arXiv preprint arXiv:1901.09138}, 2019.

\bibitem[{Tchetgen Tchetgen} et~al.(2010){Tchetgen Tchetgen}, Robins, and
  Rotnitzky]{tchetgen2010doubly}
Eric~J {Tchetgen Tchetgen}, James~M Robins, and Andrea Rotnitzky.
\newblock On doubly robust estimation in a semiparametric odds ratio model.
\newblock \emph{Biometrika}, 97\penalty0 (1):\penalty0 171--180, 2010.

\bibitem[Tchetgen~Tchetgen et~al.(2018)Tchetgen~Tchetgen, Wang, and
  Sun]{tchetgen2018discrete}
Eric~J Tchetgen~Tchetgen, Linbo Wang, and BaoLuo Sun.
\newblock Discrete choice models for nonmonotone nonignorable missing data:
  Identification and inference.
\newblock \emph{Statistica Sinica}, 28:\penalty0 2069--2088, 2018.

\bibitem[Tian(2015)]{tian2015missing}
Jin Tian.
\newblock Missing at random in graphical models.
\newblock In \emph{Proceedings of the 18th International Conference on
  Artificial Intelligence and Statistics}, 2015.

\bibitem[Tsiatis(2006)]{tsiatis2006semiparametric}
Anastasios Tsiatis.
\newblock \emph{Semiparametric theory and missing data}.
\newblock Springer Science \& Business Media, 2006.

\bibitem[Tu et~al.(2019)Tu, Zhang, Ackermann, Mohan, Kjellstr{\"o}m, and
  Zhang]{tu2019causal}
Ruibo Tu, Cheng Zhang, Paul Ackermann, Karthika Mohan, Hedvig Kjellstr{\"o}m,
  and Kun Zhang.
\newblock Causal discovery in the presence of missing data.
\newblock In \emph{The 22nd International Conference on Artificial Intelligence
  and Statistics}, pages 1762--1770, 2019.

\bibitem[Van~der Vaart(2000)]{van2000asymptotic}
Aad~W Van~der Vaart.
\newblock \emph{Asymptotic statistics}.
\newblock Cambridge University Press, 2000.

\bibitem[Vansteelandt et~al.(2007)Vansteelandt, Rotnitzky, and
  Robins]{vansteelandt2007estimation}
Stijn Vansteelandt, Andrea Rotnitzky, and James Robins.
\newblock Estimation of regression models for the mean of repeated outcomes
  under nonignorable nonmonotone nonresponse.
\newblock \emph{Biometrika}, 94\penalty0 (4):\penalty0 841--860, 2007.

\bibitem[Zhou et~al.(2010)Zhou, Little, and Kalbfleisch]{zhou2010block}
Yan Zhou, Roderick~JA Little, and John~D Kalbfleisch.
\newblock Block-conditional missing at random models for missing data.
\newblock \emph{Statistical Science}, 25\penalty0 (4):\penalty0 517--532, 2010.

\end{thebibliography}


\end{document}